\title[CS model on quotient spaces]{A velocity alignment model on quotient spaces of the Euclidean space}
\author[Park]{Hansol Park}
\address[Hansol Park]{\newline Department of Mathematical Sciences\newline Seoul National University, Seoul 08826, Republic of Korea}
\email{hansol960612@snu.ac.kr}
\newtheorem{theorem}{Theorem}[section]
\newtheorem{lemma}{Lemma}[section]
\newtheorem{corollary}{Corollary}[section]
\newtheorem{proposition}{Proposition}[section]
\newtheorem{remark}{Remark}[section]
\newtheorem{example}{Example}[section]
\newcommand{\bbr}{\mathbb R}
\newcommand{\bbs}{\mathbb S}
\newcommand{\bbz}{\mathbb Z}
\newcommand{\bbt} {\mathbb T}
\newcommand{\bbm} {\mathbb M}
\newcommand{\bbk} {\mathbb K}
\begin{document}

\date{\today}

\subjclass[2020]{70G60, 34D06, 70F10} \keywords{Universal covering space, Cucker-Smale model, Manifold, Velocity alignment}

\thanks{\textbf{Acknowledgment.} The work of H. Park is supported by NRF-2020R1A2C3A01003881(National Research Foundation of Korea).}

\begin{abstract}
The Cucker-Smale(CS) model is a velocity alignment model, and this model also has been generalized on general manifolds. We modify the CS model on manifolds to get rid of a-priori condition on particles' positions and conditions on communication functions. Since the shortest geodesic is used to define an interaction between two particles, if there exist two or more than two shortest geodesics, then the system is not well-defined. In this paper, instead of using the shortest geodesic to define an interaction between two particles, we use all geodesics to define an interaction. From this assumption, we can relax the a-priori condition and conditions on communication functions. We also explain the relationship between the suggested model and previous models. Finally, we provide some emergent behaviors on some specific manifolds(e.g. flat torus, flat M\"{o}bius strip, and flat Klein bottle). From these results, we can discuss the effect of the topology of the domain.
\end{abstract}

\maketitle \centerline{\date}

%\tableofcontents
\section{Introduction}\label{sec:1}
Analyzing the emergent behavior of dynamical systems is a one of the important part of the applied mathematics. It can be used to analyze the following phenomenon: flashing of fireflies \cite{ballerini2008interaction}, schooling of fish \cite{Aoki1982, degond2008large, Topaz:Bertozzi}. Also, it can be applied to unmanned aerial vehicles \cite{albi2019vehicular}, cooperative robot systems \cite{gamba2016global, markdahl2020robust, wang2018decentralized},  variational method \cite{Figalli_etal2011, ChFeTo2015, FeZh2019}, shape matching problem \cite{ha2020dynamical}, rotation averaging problem \cite{kapic2021new}, and minimization problem \cite{markdahl2021counterexamples}. In this paper, we study a velocity alignment model intensively. One of the famous velocity alignment model is the Cucker-Smale(CS) model studied in \cite{Cucker:Smale}, and this model is given as follows:
\begin{align}\label{A-1}
\begin{cases}
\dot{x}_i=v_i,\quad t>0,\\
\displaystyle\dot{v}_i=\frac{\kappa}{N}\sum_{j=1}^N\psi(x_i, x_j)(v_j-v_i),\quad i=1, 2, \cdots, N,\\
x_i(0)=x_i^0\in\bbr^d,\quad v_i(0)=v_i^0\in\bbr^d,
\end{cases}
\end{align}
where $\kappa>0$ is the coupling strength and $\psi$ is the communication function. Under proper conditions, the velocity alignment of this system, i.e.
\[
\lim_{t\to\infty}\|v_i-v_j\|=0\quad\forall~1\leq i, j\leq N,
\]
occurs. System \eqref{A-1} is cefined on the Euclidean space. A canonical generalization of this work is extension of the model on general manifolds. This extension is studied in \cite{ha2020emergent}, and the CS model on general manifolds is given as follows:
\begin{align}\label{A-2}
\begin{cases}
\displaystyle\frac{d}{dt}x_i=v_i\vspace{0.2cm}\\
\displaystyle\frac{D}{dt}v_i=\frac{\kappa}{N}\sum_{k=1}^N\psi(x_i, x_k)(P_{ik}v_k-v_i),\\
(x_i(0), v_i(0))=(x_i^0, v_i^0)\in TM\quad\forall i\in\{1, 2, \cdots, N\},
\end{cases}
\end{align}
where $P_{ik}$ is a parallel transport of a tangent vector on $T_{x_k}M$ to a tangent vector on $T_{x_i}M$ and $\frac{D}{dt}$ is a covariant derivative on $M$. This model was studied on the sphere \cite{ha2020emergent, ahn2021emergent}, the hyperbolic space $\mathbb{H}^2$ \cite{ahn2020emergent}, the special orthogonal group $SO(3)$ \cite{fetecau2021emergent}. A disadvantage of this model is that the interaction between $i^{th}$ and $j^{th}$ particles can not be well-defined if there exist two or more than two shortest geodesics which connect $x_i$ and $x_j$. To prevent this situation, the authors of \cite{ha2020emergent} assumed the following a-priori condition:
\begin{center}
``A shortest geodesic between two points $x_i(t)$ and $x_j(t)$ is unique for all $i\neq j$, $t\geq0$."
\end{center}
Actually, necessity of this a-priori condition comes from the structure of system \eqref{A-2}. The shortest geodesic is special in this system, since \textit{the shortest geodesic} between $x_i$ and $x_j$ is used to define an interaction between $i^{th}$ and $j^{th}$ particles. In this paper, we are interested in the following question:\vspace{0.2cm}

\begin{center}
$\bullet$(Q): How can we remove the speciality of the shortest geodesic between $x_i$ and $x_j$?\vspace{0.2cm}
\end{center}

To remove the speciality of the shortest geodesic, we used all geodesics which connect $x_i$ and $x_j$ to define the interaction between $i^{th}$ and $j^{th}$ particles. We suggest a modified system on a manifold $M$ as follows:
\begin{align*}
\begin{cases}
\displaystyle\frac{d}{dt}x_i=v_i,\vspace{0.2cm}\\
\displaystyle\frac{D}{dt}v_i=\frac{\kappa}{N}\sum_{k=1}^N\sum_{\gamma\in\Gamma_{x_k}^{x_i}}\varphi(|\gamma|)(P_{ik}^\gamma v_k-v_i),\quad t>0,\\
x_i(0)=x_i^0\in M,\quad v_i(0)=v_i^0\in T_{x_i^0}M,\quad \forall i\in \mathcal{N},
\end{cases}
\end{align*}
where $\Gamma_{x_k}^{x_i}$ is a set of all geodesics which starts from $x_k$ and finish at $x_i$, and $P_{ik}^\gamma$ is a parallel transport from $x_k$ to $x_i$ along a geodesic $\gamma$. Here, $\varphi$ is a communication function depends on the length of a geodesic $\gamma$. Throughout this paper, we denote the universal covering space of $M$ with the covering metric by $\tilde{M}$ and $p:\tilde{M}\to M$ is the corresponding covering map. We can lift this system onto the universal covering space $\tilde{M}$ as follows:
\begin{align*}
\begin{cases}
\displaystyle\frac{d}{dt}\tilde{x}_i=\tilde{v}_i,\vspace{0.2cm}\\
\displaystyle\frac{D}{dt}\tilde{v}_i=\frac{\kappa}{N}\sum_{k=1}^N\sum_{\tilde{y}_k\in p^{-1}(p(\tilde{x}_k))} \varphi(\mathrm{dist}(\tilde{x}_i, \tilde{y}_k))(P_{\tilde{x}_i\tilde{y}_k} \tilde{u}^{\tilde{y}_k}_k-\tilde{v}_i),\quad t>0,\\
\tilde{x}_i(0)=\tilde{x}_i^0\in\tilde{M}^d,\quad \tilde{v}_i(0)=\tilde{v}_i^0\in T_{\tilde{x}_i^0}\tilde{M},\quad\forall~i\in\mathcal{N},
\end{cases}
\end{align*} 
where $p^{-1}_{\tilde{x}}$ is a inverse map of $p$ defined on neighborhood of $\tilde{x}$ and $\tilde{u}_k^{\tilde{y}_k}:=Dp^{-1}_{\tilde{y}_k}(v_k)=Dp^{-1}_{\tilde{y}_k}\circ Dp_{\tilde{x}_k}(\tilde{v}_k)$ and $P_{\tilde{x}\tilde{y}}$ is a parallel transport from a tangent vector at $\tilde{y}$ to a tangent vector at $\tilde{x}$. We assume the following two conditions to $M$:\\

\noindent($\mathcal{M}1$): A set $p^{-1}(x)\subset \tilde{M}$ is at most countable set for any $x\in M$.\\

\noindent($\mathcal{M}2$): For any points $\tilde{x}, \tilde{y}\in\tilde{M}$, there exists a unique geodesic which connects two points $\tilde{x}$ and $\tilde{y}$.\\

Here, ($\mathcal{M}1$) is assumed for the well-definedness of sum $\sum_{\tilde{y}_k\in p^{-1}(p(\tilde{x}_k))}$, and ($\mathcal{M}2$) is assumed for the well-definedness of the parallel transport $P_{\tilde{x}_i\tilde{y}_k}$. We provide more details about these assumptions in Section \ref{sec:4.1}.

Now, we define the following energy functional
\[
\mathcal{E}[\mathcal{V}]:=\frac{1}{2}\sum_{i=1}^N\|v_i\|^2
\]
where $\mathcal{V}:=\{v_i\}_{i=1}^N$. Then $\mathcal{E}$ decreases along the time evolution(i.e. $\frac{d}{dt}\mathcal{E}\leq0$). We combine this fact and Barbalat's lemma(Lemma \ref{L2.1}) to obtain the long-time behavior of the system as follows(See Theorem \ref{T4.1}):
\[
\lim_{t\to\infty}\sum_{\gamma\in \Gamma_{x_k}^{x_i}}\psi(|\gamma|)\|P_{ik}^\gamma v_k-v_i\|^2=0\quad\forall~1\leq i, k\leq N.
\]
If we lift this result to the universal covering space $\tilde{M}$, we have the follows(See Corollary \ref{C4.1}):
\begin{align}\label{A-5}
\lim_{t\to\infty}\sum_{\tilde{y}_k\in p^{-1}(p(\tilde{x}_k))}\varphi(\mathrm{dist}(\tilde{x}_i, \tilde{y}_k))\|P_{\tilde{x}_i\tilde{y}_k}\tilde{u}_k^{\tilde{y}_k}-\tilde{v}_i\|^2=0\quad\forall~1\leq i, k\leq N.
\end{align}
Since $P_{\tilde{x}_i\tilde{y}_k}\tilde{u}_k^{\tilde{y}_k}$ can not be simplified on a general manifold $\tilde{M}$, we assume that $\tilde{M}$ is the Euclidean space. Also, we simplify condition \eqref{A-5} on some specific manifolds(e.g. flat torus, flat M\"{o}bius strip, and flat Klein bottle) in Section \ref{sec:5}. On the flat torus, we could obtain the ordinary velocity alignment 
\begin{align}\label{A-6}
\lim_{t\to\infty}\|v_i-v_j\|=0\quad\forall~1\leq i, k\leq N,
\end{align}
under suitable conditions(see Section \ref{sec:5.1}). Let the flat M\"{o}bius strip and the flat Klein bottle constructed by the way introduced in Sections \ref{sec:5.2} and \ref{sec:5.3}, respectively. Then, on the flat M\"{o}bius strip and the flat Klein bottle, we could obtain the ordinary velocity alignment \eqref{A-6} and the second component of velocities $v_i$ converges for all $1\leq i\leq N$(see Sections \ref{sec:5.2} and \ref{sec:5.3}).
\\

The rest of this paper is organized as follows. In Section \ref{sec:2}, we introduce the universal covering space and previous results of the Cucker-Smale(CS) model on manifolds. In Section \ref{sec:3}, we modify the CS model on manifold on the flat torus $\bbt^d$.  From this modification, we suggest the modified CS model on manifold on general manifold under some assumption in Section \ref{sec:4}. We also study emergent behaviors of the modified CS model on some specific spaces(flat torus, flat M\"{o}bius strip, flat Klein bottle) in Section \ref{sec:5}. Finally, Section \ref{sec:6} is devoted to a brief summary of the paper.\\

\noindent\textbf{Gallery of Notations.} Now, we present some notations. Since we use the index set frequently, we define
\[
\mathcal{N}:=\{1, 2, \cdots, N\}.
\]
Also, for any vector $x\in\bbr^d$, we denote $\alpha^{th}$ component of $x$ by $(x)_\alpha$. 
\[
\text{i.e.}\quad x=((x)_1, (x)_2, \cdots, (x)_d).
\]

\section{Preliminaries}\label{sec:2}
In this section, we provide some preparatory concepts. Since the goal of this paper is to construct a velocity alignment model using universal covering spaces, we provide a simple review of universal covering spaces and prior studies on velocity alignment models.

\subsection{Universal covering space of spaces of constant curvature}\label{sec:2.1}
In this subsection, we provide a review on universal covering spaces of constant curvature space, since we will consider quotient spaces of the Euclidean space.

\begin{theorem}[Theorem 4.1 of \cite{doCarmo1992}]\label{T2.1}
Let $M$ be a $d$-dimensional complete Riemannian manifold with constant sectional curvature zero. Then the universal covering $\tilde{M}$ of $M$, with the covering metric, is isometric to $\bbr^d$.
\end{theorem}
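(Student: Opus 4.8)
The plan is to reduce the statement to the classical fact that a complete, simply connected, flat Riemannian manifold is isometric to $\bbr^d$, and then to establish that fact through the exponential map. First I would record what the covering metric buys us: by definition it is the pullback $p^*g$ of the metric $g$ on $M$, so the covering map $p:\tilde M\to M$ is a local isometry. Sectional curvature is a local invariant and completeness is preserved by a covering that is a local isometry (geodesics lift and project), so $\tilde M$ is again complete with constant sectional curvature zero; moreover $\tilde M$ is simply connected because it is the universal cover. Thus it suffices to prove that such an $\tilde M$ is isometric to $\bbr^d$.

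Next I would fix a base point $o\in\tilde M$ together with an orthonormal basis of $T_o\tilde M$, which identifies $T_o\tilde M$ with $\bbr^d$ by a linear isometry. Because $\tilde M$ is complete the exponential map $\exp_o:T_o\tilde M\to\tilde M$ is defined on all of $T_o\tilde M$, and because $\tilde M$ is complete, simply connected and has nonpositive curvature, the Cartan--Hadamard theorem guarantees that $\exp_o$ is a diffeomorphism. This already produces a diffeomorphism $\bbr^d\cong T_o\tilde M\to\tilde M$; the remaining, and genuinely decisive, task is to check that this diffeomorphism is an isometry, and this is precisely where the vanishing of the curvature enters.

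To verify that $\exp_o$ is a local isometry I would compute its differential through Jacobi fields. For $v,w\in T_o\tilde M$, the vector $d(\exp_o)_{tv}(tw)$ equals $J(t)$, where $J$ is the Jacobi field along the geodesic $\gamma(t)=\exp_o(tv)$ with $J(0)=0$ and $\frac{D}{dt}J(0)=w$, solving
\[
\frac{D^2}{dt^2}J+R(J,\gamma')\gamma'=0.
\]
Since the curvature tensor $R$ vanishes identically, this reduces to $\frac{D^2}{dt^2}J=0$, so in a parallel orthonormal frame along $\gamma$ the field is simply $J(t)=tW(t)$ with $W$ the parallel transport of $w$. Hence $\|J(t)\|=t\|w\|$, which is exactly the norm of the corresponding vector in the Euclidean model; combined with the Gauss lemma this shows that $d(\exp_o)$ preserves the inner product at every point. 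Alternatively one may invoke Cartan's theorem comparing $\tilde M$ with the flat model $\bbr^d$: the hypothesis relating the two curvature tensors is vacuous because both are zero, so the map is automatically a local isometry. Being simultaneously a diffeomorphism (from Cartan--Hadamard) and a local isometry, $\exp_o$ is a global isometry, and composing with the chosen linear isometry $\bbr^d\to T_o\tilde M$ finishes the proof.

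The main obstacle I anticipate is exactly the upgrade from diffeomorphism to isometry: Cartan--Hadamard alone yields only a diffeomorphism, and controlling the pulled-back metric requires the Jacobi field computation above (equivalently Cartan's theorem). This is the single point at which $K\equiv 0$ is used in an essential way rather than merely $K\le 0$, so I would be careful to present this step in full while treating the lifting of completeness, curvature, and simple connectivity as routine.
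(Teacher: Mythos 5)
Your proposal is correct; note that the paper itself gives no proof of this statement (it is quoted verbatim as Theorem 4.1 of do Carmo's textbook), so the only meaningful comparison is with the standard textbook argument, which is exactly what you reproduce: lift completeness and flatness to the simply connected cover, apply Cartan--Hadamard to get that $\exp_o$ is a diffeomorphism, and upgrade it to an isometry using the Jacobi equation with $R\equiv 0$ (equivalently Cartan's theorem on curvature determining the metric). Your identification of the diffeomorphism-to-isometry step as the one place where $K\equiv 0$ rather than $K\le 0$ is used is precisely the right emphasis, and the computation $J(t)=tW(t)$ in a parallel frame closes that step completely.
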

Here, covering metric means that the covering map $p:\tilde{M}\to M$ is a local isomorphism.
\begin{corollary}
From Theorem \ref{T2.1}, we have the following results:

\noindent(1) The universal covering space of the flat torus $\mathbb{T}^d$ is $\bbr^d$.

\noindent(2) The universal covering spaces of the flat M\"{o}bius strip $\mathbb{M}$ and the flat Klein bottle $\mathbb{K}$ are $\bbr^2$.
\end{corollary}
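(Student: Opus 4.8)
The plan is to reduce the corollary to a verification of the hypotheses of Theorem \ref{T2.1} for each of the three spaces; once completeness and constant sectional curvature zero are confirmed, the theorem immediately identifies the universal cover with $\bbr^d$ (respectively $\bbr^2$). Since Theorem \ref{T2.1} does all the analytic work, what remains is a short checklist of geometric properties for each space.

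For part (1), I would present the flat torus as the quotient $\bbt^d=\bbr^d/\bbz^d$, with $\bbz^d$ acting by integer translations. This action is by isometries and is free and properly discontinuous, so the projection $\bbr^d\to\bbt^d$ is a local isometry and the induced metric is flat; in particular the sectional curvature is identically zero. Because $\bbt^d$ is compact it is complete, and it is manifestly $d$-dimensional, so all hypotheses of Theorem \ref{T2.1} hold and the universal cover of $\bbt^d$ is isometric to $\bbr^d$.

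For part (2), I would invoke the explicit presentations given in Sections \ref{sec:5.2} and \ref{sec:5.3}, writing $\bbm$ and $\bbk$ as quotients of $\bbr^2$ by discrete groups of isometries: for the flat M\"{o}bius strip the group generated by the glide reflection $(x,y)\mapsto(x+1,-y)$, and for the flat Klein bottle the group generated by this glide reflection together with the translation $(x,y)\mapsto(x,y+1)$. As in part (1), each action is free and properly discontinuous by Euclidean isometries, so both quotients inherit flat metrics of constant sectional curvature zero and are two-dimensional. Theorem \ref{T2.1} then yields that the universal covers of $\bbm$ and $\bbk$ are each isometric to $\bbr^2$.

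The step I expect to require the most care is completeness for the M\"{o}bius strip. The familiar bounded M\"{o}bius strip carries a boundary and has finite transverse extent, so it is neither a boundaryless manifold nor complete, and Theorem \ref{T2.1} does not apply to it. I would therefore stress that the object used here is the \emph{complete} flat M\"{o}bius band---the full quotient of $\bbr^2$ (unbounded in the transverse direction) by the glide reflection---whose completeness follows from completeness of $\bbr^2$ together with the general fact that the quotient of a complete Riemannian manifold by a properly discontinuous group of isometries is again complete. For the Klein bottle completeness is immediate, since that quotient is compact.
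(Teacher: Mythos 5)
Your proposal is correct and takes essentially the route the paper intends: the paper states this corollary without any proof, treating it as an immediate consequence of Theorem \ref{T2.1}, and your write-up simply supplies the hypothesis-checking (constant zero curvature, completeness, dimension) for the three quotients that the paper leaves implicit. Your attention to the M\"{o}bius strip is well placed and consistent with the paper, since the object defined in Section \ref{sec:5.2} is indeed the complete, transversally unbounded band $[0,1]\times\bbr/\!\sim$ rather than the compact bounded one.
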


\subsection{The Cucker-Smale model on manifolds}\label{sec:2.2}
In this subsection, we introduce the Cucker-Smale(CS) model on manifolds and its emergent behaviors. System \eqref{A-1} is a velocity alignment model on the Euclidean space introduced in \cite{Cucker:Smale}. It is natural that generalization of this system onto general manifolds. The authors of \cite{ha2020emergent} suggested the CS model on general manifolds as system \eqref{A-2}. Now, we present the previous results of system \eqref{A-2} on various spaces.

\subsubsection{On the sphere $\bbs^d$}
The CS model \eqref{A-2} on sphere was first studied in \cite{ha2020emergent}. In this paper, the authors assumed the following a-priori:
\begin{align}\label{B-1}
\sup_{0\leq t<\infty} \max_{i, j\in \mathcal{N}} \mathrm{dist}(x_i(t), x_k(t))<\pi.
\end{align}
This a priori assumption allows the well-posedness of the system on the sphere $\bbs^d$. The authors of \cite{ahn2021emergent} improved these result. They assumed the following condition on the communication function $\psi$:
\begin{align}\label{B-2}
\mathrm{dist}(x, y)=\pi\quad\Longrightarrow\quad\psi(x, y)=0.
\end{align}
The well-posedness of system \eqref{A-2} on the sphere can be broken when the parallel transport $P_{ik}$ is not well defined, and this situation occurs when $x_i$ and $x_k$ are the antipodal points of each others for some $i, k\in\mathcal{N}$. However, if we impose condition \eqref{B-2}, we do not have to calculate $P_{ik}v_k-v_i$ since $\psi(x_i, x_k)$ is zero. One of the useful lemma to show a velocity alignment is Barbalat's lemma. We provide the lemma without the proof.
\begin{lemma}[Barbalat's lemma \cite{Barbalat1959}]\label{L2.1}
Suppose that a real-valued function $f:[0, \infty)\to\bbr$ is uniformly continuous and it satisfies
\[
\lim_{t\to\infty}\int_0^t f(s)ds\quad\text{exists}.
\]
Then, $f$ tends to zero as $t\to\infty$. i.e. $\lim_{t\to\infty}f(t)=0$.
\end{lemma}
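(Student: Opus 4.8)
The plan is to argue by contradiction, using uniform continuity to convert a pointwise failure of convergence into a genuine obstruction to the convergence of the integral. First I would suppose that the conclusion fails, i.e. that $f(t)\not\to 0$ as $t\to\infty$. Then there exist some $\varepsilon>0$ and a sequence $t_n\uparrow\infty$ with $|f(t_n)|\geq\varepsilon$ for every $n$. The guiding idea is that each such time $t_n$ forces the integral of $f$ to accumulate a definite, sign-coherent amount of area over a fixed-length window, and this is incompatible with the Cauchy criterion for the existence of $\lim_{t\to\infty}\int_0^t f(s)\,ds$.

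The key step is to exploit uniform continuity. For the $\varepsilon$ fixed above there is a $\delta>0$ such that $|f(t)-f(s)|<\varepsilon/2$ whenever $|t-s|<\delta$. Hence on each interval $[t_n,t_n+\delta]$ one has $|f(s)|\geq|f(t_n)|-|f(s)-f(t_n)|\geq\varepsilon-\varepsilon/2=\varepsilon/2$. Since the oscillation of $f$ on this interval is strictly smaller than $|f(t_n)|$, the function $f$ cannot vanish there and therefore keeps a constant sign, which gives $\bigl|\int_{t_n}^{t_n+\delta}f(s)\,ds\bigr|=\int_{t_n}^{t_n+\delta}|f(s)|\,ds\geq\varepsilon\delta/2$.

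Finally I would invoke the hypothesis. Writing $F(t):=\int_0^t f(s)\,ds$, the existence of $\lim_{t\to\infty}F(t)$ means that $F$ is Cauchy at infinity: for every $\eta>0$ there is a $T$ with $|F(b)-F(a)|<\eta$ for all $a,b\geq T$. Choosing $\eta=\varepsilon\delta/2$ and taking $n$ large enough that $t_n\geq T$, the pair $a=t_n$, $b=t_n+\delta$ yields $|F(t_n+\delta)-F(t_n)|\geq\varepsilon\delta/2$, contradicting the Cauchy bound. This contradiction establishes $\lim_{t\to\infty}f(t)=0$. The main obstacle — really the only delicate point — is the sign control: the lower bound $\varepsilon\delta/2$ must hold for $\bigl|\int f\bigr|$ rather than for $\int|f|$, and this is precisely why uniform continuity, rather than mere continuity or boundedness, is indispensable. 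Without it one could manufacture tall, thin spikes whose areas are summable, so that $F$ converges while $f\not\to0$; uniform continuity rules this out by forcing $f$ to remain both large in magnitude and constant in sign across a window of fixed length $\delta$.
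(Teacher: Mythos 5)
Your proof is correct, and it is the classical argument for Barbalat's lemma. Note, however, that there is no paper proof to compare it against: the paper states this lemma explicitly ``without the proof,'' citing \cite{Barbalat1959}, so the lemma is imported as a black box and used later (together with the energy dissipation of Lemma \ref{L4.1}) to derive Theorem \ref{T4.1}. Your route---negate the conclusion to get $t_n\uparrow\infty$ with $|f(t_n)|\geq\varepsilon$, use uniform continuity to force $|f|\geq\varepsilon/2$ with constant sign on the windows $[t_n,t_n+\delta]$, and contradict the Cauchy criterion at infinity for $F(t)=\int_0^t f(s)\,ds$---is precisely the standard proof found in the literature, and your closing remark correctly identifies why uniform continuity (rather than continuity alone) is indispensable. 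One cosmetic point: uniform continuity gives $|f(t)-f(s)|<\varepsilon/2$ only for $|t-s|<\delta$, so on the closed interval $[t_n,t_n+\delta]$ the bound at the right endpoint should be obtained by continuity as a non-strict inequality $|f(t_n+\delta)-f(t_n)|\leq\varepsilon/2$, or one can simply work on $[t_n,t_n+\delta/2]$; either fix leaves the lower bound $\bigl|\int_{t_n}^{t_n+\delta}f(s)\,ds\bigr|\geq\varepsilon\delta/2$ (or $\varepsilon\delta/4$) intact and the contradiction goes through unchanged.
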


\begin{theorem}[Emergent behavior on the sphere \cite{ahn2021emergent}]\label{T2.2}
Let $\{(x_i, v_i)\}_{i=1}^N$ be a global smooth solution to \eqref{A-2} on the sphere $\bbs^d$ and assume that $\psi:\bbs^d \times \bbs^d\to\bbr$ is a positive smooth function satisfying \eqref{B-2}. Then, we have the following dichotomy for the asymptotic dynamics of $\{(x_i, v_i)\}_{i=1}^N$:

\noindent(1) either the energy converges to zero:
\[
\lim_{t\to\infty}\mathcal{E}(t)=0,
\]
where $\mathcal{E}(t)=\sum_{i=1}^N\|v_i\|^2$.

\noindent(2) or the energy converges to a nonzero positive value and all positions approach to a common great circle asymptotically: for every $i, j, k\in\mathcal{N}$ and $a, b, c\in\{1, \cdots, d+1\}$ we have
\[
\lim_{t\to\infty}\mathcal{E}(t)>0\quad\text{and}\quad \lim_{t\to\infty}\mathrm{det}\begin{pmatrix}
x_i^a(t)&x_j^a(t)&x_k^a(t)\\
x_i^b(t)&x_j^b(t)&x_k^b(t)\\
x_i^c(t)&x_j^c(t)&x_k^c(t)
\end{pmatrix}=0.
\]
\end{theorem}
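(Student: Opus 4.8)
The plan is to run the standard energy–dissipation argument, tuned to the geometry of the sphere, and then split the conclusion according to whether the limiting energy vanishes. First I would differentiate $\mathcal{E}$ along \eqref{A-2}. Since the covariant derivative $\tfrac{D}{dt}$ is metric compatible, $\tfrac{d}{dt}\|v_i\|^2=2\langle v_i,\tfrac{D}{dt}v_i\rangle$, so
\begin{align*}
\frac{d}{dt}\mathcal{E}(t)=\frac{2\kappa}{N}\sum_{i,k=1}^N\psi(x_i,x_k)\big(\langle v_i,P_{ik}v_k\rangle-\|v_i\|^2\big).
\end{align*}
Using the symmetry of the communication function together with the fact that parallel transport along the (unique, non-antipodal) geodesic is a linear isometry with $P_{ki}=P_{ik}^{-1}=P_{ik}^{*}$, I would symmetrize the double sum in $(i,k)$ and complete the square to obtain the dissipation identity
\begin{align*}
\frac{d}{dt}\mathcal{E}(t)=-\frac{\kappa}{N}\sum_{i,k=1}^N\psi(x_i,x_k)\,\|P_{ik}v_k-v_i\|^2\le 0.
\end{align*}
Thus $\mathcal{E}$ is nonincreasing and bounded below by $0$, so it converges to some $\mathcal{E}_\infty\ge 0$. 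The dichotomy between (1) and (2) is then exactly the split $\mathcal{E}_\infty=0$ versus $\mathcal{E}_\infty>0$, so only the geometric conclusion in (2) requires real work.

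Next I would apply Barbalat's lemma (Lemma \ref{L2.1}) to $f(t):=\sum_{i,k}\psi(x_i,x_k)\|P_{ik}v_k-v_i\|^2$. Integrating the dissipation identity gives $\int_0^t f(s)\,ds=\tfrac{N}{\kappa}\big(\mathcal{E}(0)-\mathcal{E}(t)\big)$, which converges, so it remains to verify uniform continuity of $f$. This follows from boundedness of the solution: positions lie on the compact sphere, $\|v_i\|^2\le\mathcal{E}(0)$, $\psi$ is smooth (hence bounded with bounded gradient), and $\|P_{ik}v_k-v_i\|\le\|v_k\|+\|v_i\|$ because $P_{ik}$ is an isometry. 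The one delicate point is the discontinuity of $P_{ik}$ as $x_k$ approaches the antipode of $x_i$; this is precisely neutralized by \eqref{B-2}, which forces $\psi(x_i,x_k)\to 0$ there and keeps the product Lipschitz along the trajectory. Barbalat's lemma then yields $f(t)\to 0$, and since every summand is nonnegative, $\psi(x_i,x_k)\|P_{ik}v_k-v_i\|^2\to 0$ for each pair $(i,k)$.

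Finally I would treat case (2), where $\mathcal{E}_\infty>0$ so the velocities cannot all die out. For any pair whose positions do not drift to antipodal points, $\psi(x_i,x_k)$ stays bounded away from $0$, whence $P_{ik}v_k-v_i\to 0$: the velocities become parallel in the transported sense. I would then pass to the asymptotic regime by compactness, extracting along $t_n\to\infty$ limits $(x_i^\infty,v_i^\infty)$, and exploit the sphere identity $\tfrac{D}{dt}v_i=\ddot{x}_i+\|v_i\|^2 x_i$ together with the vanishing forcing to identify the limiting motion as a rotation confined to a single $2$-plane, which would give the coplanarity $x_i\wedge x_j\wedge x_k\to 0$, i.e. the vanishing of every $3\times 3$ determinant in (2).

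The main obstacle is exactly this last translation, from the velocity statement $P_{ik}v_k\approx v_i$ to the positional statement that all triples become linearly dependent. The difficulty is twofold. First, parallel transport on $\bbs^d$ has no global closed form that linearizes the alignment condition, so one cannot simply read off a common rotation generator; moreover a nonzero parallel velocity field cannot exist on the curved sphere, so the alignment can only be asymptotic and must be exploited through the limiting configuration rather than exactly. Second, pairs that genuinely converge to antipodal positions carry no information, since $\psi$ vanishes there, and must be isolated and handled separately. I expect to control these by tracking the bivectors $x_i\wedge v_i$, which are conserved in the forcing-free limit, and by using the positivity of $\mathcal{E}_\infty$ to rule out the degenerate configurations and to force the limiting rotation plane to be two-dimensional, thereby producing the determinant condition of alternative (2).
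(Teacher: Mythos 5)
First, a point of reference: the paper you are working from does not prove Theorem \ref{T2.2} at all --- it is quoted as background from the cited work \cite{ahn2021emergent}, so there is no internal proof to compare against and your proposal must stand on its own.

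Your first two steps do stand. The dissipation identity $\frac{d}{dt}\mathcal{E}=-\frac{\kappa}{N}\sum_{i,k}\psi(x_i,x_k)\|P_{ik}v_k-v_i\|^2$ is correct (it needs the symmetry $\psi(x,y)=\psi(y,x)$, which you should state, plus the isometry property $P_{ki}=P_{ik}^{-1}$), and feeding it into Barbalat's lemma (Lemma \ref{L2.1}) to get $\psi(x_i,x_k)\|P_{ik}v_k-v_i\|^2\to0$ for each pair is the standard route. Even here, though, the uniform-continuity check is asserted rather than proved: condition \eqref{B-2} makes the product $\psi\, P_{ik}$ \emph{continuous} across antipodal configurations (since $P_{ik}$ is bounded as an isometry), but uniform continuity in time of $f$ requires the vanishing rate of $\psi$ at distance $\pi$ to dominate the blow-up rate of $\frac{d}{dt}P_{ik}$ there, which is a quantitative estimate your sketch never performs.

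The genuine gap is alternative (2), which is the entire content of the theorem beyond the trivial split $\mathcal{E}_\infty=0$ versus $\mathcal{E}_\infty>0$. What you offer there is a plan, not a proof: ``extract subsequential limits,'' ``exploit the sphere identity,'' ``track the bivectors $x_i\wedge v_i$,'' ``rule out degenerate configurations'' --- none of these steps is executed, and you yourself name the two obstructions (no global linearization of parallel transport; antipodal pairs carrying no information). Moreover, one intermediate claim is already unsound as stated: you split pairs into those ``that genuinely converge to antipodal positions'' and those for which $\psi(x_i,x_k)$ ``stays bounded away from $0$.'' This is a false dichotomy: positions need not converge at all, and $\psi(x_i(t),x_k(t))$ may tend to zero along one sequence of times and stay bounded below along another, in which case neither of your alternatives applies and you cannot conclude $P_{ik}v_k-v_i\to0$. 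A complete argument has to work with the angular-momentum-type quantities $x_iv_i^\top-v_ix_i^\top$: derive their evolution equation along the flow, use the dissipation to show their pairwise differences vanish asymptotically, handle the particles whose individual speeds may degenerate, and only then pass to the vanishing of every $3\times3$ minor of $(x_i\,|\,x_j\,|\,x_k)$. Your proposal names the right objects but leaves this entire mechanism --- which is precisely where the theorem lives --- unproven.
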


\subsubsection{On the hyperbolic space $\mathbb{H}^2$}
The CS model on the hyperbolic space $\mathbb{H}^2$ was studied in \cite{ahn2020emergent}. For any $x, y\in\mathbb{H}^2$, there exists the unique geodesic $\gamma$ which connects $x$ and $y$. This means, the parallel transport $P_{ik}$ between $x_i$ and $x_k$ is always well-defined. So, we do not need a condition which is similar to \eqref{B-2} in this case.

\begin{theorem}[Emergent behavior on the hyperbolic space \cite{ahn2020emergent}]\label{T2.3}
Let $\{(x_i, v_i)\}_{i=1}^N$ be a global smooth solution to \eqref{A-2} on the hyperbolic space $\mathbb{H}^2$ and assume that $\psi:\mathbb{H}^2\times \mathbb{H}^2\to\bbr$ is a strictly positive smooth function. Then, we have the following dichotomy for the asymptotic dynamics of $\{(x_i, v_i)\}_{i=1}^N$:

\noindent(1) either the energy tends to zero:
\[
\lim_{t\to\infty}\mathcal{E}(t)=0.
\]

\noindent(2) or the energy converges to a positive value, and position configuration becomes coplanar asymptotically:
\[
\lim_{t\to\infty}\mathcal{E}(t)=\mathcal{E}^\infty>0\quad\text{and}\quad \lim_{t\to\infty}\mathrm{det}\Big(x_i(t)~\Big|~ x_j(t)~\Big| x_k(t)\Big)=0\quad\forall~i, j, k\in\mathcal{N}
\]
where $x_i(t)\in\mathbb{H}^2\subset\bbr^3$ is considered as a three-dimensional vector for all $i\in\mathcal{N}$.
\end{theorem}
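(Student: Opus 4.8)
The plan is to follow the standard energy–dissipation/Barbalat route and then isolate the geometry of $\mathbb{H}^2$ in the nontrivial branch of the dichotomy. First I would differentiate the energy $\mathcal{E}(t)=\sum_{i=1}^N\|v_i\|^2$ along \eqref{A-2}. Since the connection is metric (Levi-Civita), $\frac{d}{dt}\|v_i\|^2 = 2\langle \frac{D}{dt}v_i, v_i\rangle$. Summing over $i$ and using that each parallel transport $P_{ik}:T_{x_k}\mathbb{H}^2\to T_{x_i}\mathbb{H}^2$ is a linear isometry with $P_{ki}=P_{ik}^{-1}=P_{ik}^{*}$ (which is globally well-defined here because geodesics on $\mathbb{H}^2$ are unique, so no analogue of \eqref{B-2} is needed), together with the symmetry $\psi(x_i,x_k)=\psi(x_k,x_i)$ and the polarization identity $2\langle P_{ik}v_k,v_i\rangle = \|v_k\|^2+\|v_i\|^2-\|P_{ik}v_k-v_i\|^2$, the transport cross terms symmetrize and cancel, giving
\[
\frac{d}{dt}\mathcal{E}(t) = -\frac{\kappa}{N}\sum_{i,k=1}^N \psi(x_i,x_k)\,\|P_{ik}v_k-v_i\|^2 \le 0.
\]
Hence $\mathcal{E}$ is nonincreasing and bounded below by $0$, so $\mathcal{E}^\infty:=\lim_{t\to\infty}\mathcal{E}(t)$ exists and is nonnegative; this already produces the two branches, according to whether $\mathcal{E}^\infty=0$ or $\mathcal{E}^\infty>0$.

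Second, I would feed the dissipation into Barbalat's lemma (Lemma \ref{L2.1}). Integrating the identity above gives $\int_0^\infty \sum_{i,k}\psi\|P_{ik}v_k-v_i\|^2\,dt = \frac{N}{\kappa}(\mathcal{E}(0)-\mathcal{E}^\infty)<\infty$. To invoke Barbalat I must check that the integrand $f(t):=\sum_{i,k}\psi(x_i,x_k)\|P_{ik}v_k-v_i\|^2$ is uniformly continuous; this follows from the a priori boundedness of the velocities (immediate from the energy bound, $\|v_i\|\le\sqrt{\mathcal{E}(0)}$) and the resulting boundedness of the accelerations $\frac{D}{dt}v_i$, which control $\dot f$ once $\psi$ and its derivative are bounded along the flow. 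Barbalat then yields
\[
\lim_{t\to\infty}\sum_{i,k=1}^N \psi(x_i,x_k)\,\|P_{ik}v_k-v_i\|^2 = 0.
\]

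Third comes the geometric step, for which I would pass to the hyperboloid model $\mathbb{H}^2=\{x\in\bbr^3:\langle x,x\rangle_L=-1,\ (x)_1>0\}$ and write $\frac{D}{dt}v_i$ and $P_{ik}$ explicitly in the ambient $\bbr^3$ (using the second fundamental form of the hyperboloid). In the branch $\mathcal{E}^\infty>0$ the velocities do not all vanish, while the vanishing of $f$ forces the transported velocities to align asymptotically. The target $\det(x_i\,|\,x_j\,|\,x_k)\to 0$ is exactly the assertion that the three points become linearly dependent in $\bbr^3$, i.e. asymptotically share a single plane through the origin, whose intersection with the hyperboloid is a geodesic. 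I would try to establish this by tracking $D_{ijk}(t):=\det(x_i\,|\,x_j\,|\,x_k)$: differentiating and substituting $\dot x_\ell=v_\ell$ expresses $\dot D_{ijk}$ through the velocities, and the asymptotic alignment (each particle moving asymptotically along a common geodesic) should drive $D_{ijk}$ to a limit that the dissipation forces to be $0$.

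The main obstacle is precisely this last step in the positive-energy branch. Unlike the sphere, $\mathbb{H}^2$ is noncompact, so one cannot simply bound $\psi$ from below to upgrade $f\to 0$ into pointwise alignment $\|P_{ik}v_k-v_i\|\to 0$, since $\mathrm{dist}(x_i,x_k)$ may diverge and the strict positivity of $\psi$ gives no uniform floor; the configuration may spread to infinity, and coplanarity must be extracted from the asymptotically geodesic motion rather than from boundedness of the cluster. Controlling $D_{ijk}$ in the presence of possibly diverging positions, and ruling out oscillation so that its limit is genuinely $0$, is the delicate part, and is where the specific Lorentzian geometry of the hyperboloid has to be exploited.
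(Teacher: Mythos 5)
First, a point of comparison: the paper you are reading never proves Theorem \ref{T2.3} at all --- it is quoted in the preliminaries as a known result from \cite{ahn2020emergent}, so the only in-paper analogue of your first two steps is the dissipation computation of Lemma \ref{L4.1}. Your energy estimate (with the correct constant $-\kappa/N$ for $\mathcal{E}=\sum_i\|v_i\|^2$) and the Barbalat step do reproduce the standard opening of that argument correctly. Two caveats even there: (i) uniform continuity of $f(t)=\sum_{i,k}\psi(x_i,x_k)\|P_{ik}v_k-v_i\|^2$ is not automatic from the energy bound alone, because $\mathbb{H}^2$ is noncompact and the hypothesis only gives $\psi$ smooth and strictly positive --- nothing prevents $\psi$ or its derivatives from being unbounded along trajectories whose mutual distances diverge, so you are implicitly adding a boundedness/Lipschitz assumption on $\psi$; (ii) the conclusion Barbalat gives you is only the weighted statement $\psi(x_i,x_k)\|P_{ik}v_k-v_i\|^2\to 0$, which, as you yourself note, cannot be upgraded to $\|P_{ik}v_k-v_i\|\to 0$ without a positive floor on $\psi$.

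The genuine gap is the entirety of branch (2), which is the actual content of the theorem. Your plan --- differentiate $D_{ijk}(t)=\det(x_i\,|\,x_j\,|\,x_k)$, substitute $\dot x_\ell=v_\ell$, and argue that ``asymptotically geodesic motion should drive $D_{ijk}$ to a limit that the dissipation forces to be $0$'' --- is a hope, not an argument: you have no mechanism ruling out that $D_{ijk}$ oscillates or converges to a nonzero value, precisely because the weighted dissipation gives you no pointwise alignment when positions spread out. The proof in \cite{ahn2020emergent} does not proceed this way; it exploits the Lorentzian structure of the hyperboloid model by attaching to each particle the Minkowski angular-momentum vector $w_i = x_i \times_L v_i$ (Lorentzian cross product), deriving an ODE for these vectors from \eqref{A-2}, and showing that in the positive-energy branch the $w_i$ converge to a single common vector $w^\infty\neq 0$. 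Coplanarity is then immediate: each $x_i(t)$ asymptotically lies in the plane $\{x:\langle x, w^\infty\rangle_L=0\}$, whose intersection with the hyperboloid is a common geodesic, and this is exactly what forces $\det(x_i\,|\,x_j\,|\,x_k)\to 0$ even when mutual distances diverge. Without introducing such a conserved-quantity-like object (or some substitute that survives the noncompactness), your step three cannot be closed, and you have correctly identified --- but not bridged --- the place where the proof actually lives.
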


\subsubsection{On the special orthogonal group $SO(3)$}
The CS model on $SO(3)$ was studied in \cite{fetecau2021emergent}. Instead of using $\{(x_i, v_i)\}_{i=1}^N$, we use the pair $\{(R_i, A_i)\}_{i=1}^N$ which satisfies
\[
x_i=R_i,\quad v_i=R_iA_i\quad~\forall~ i\in\mathcal{N}.
\]
Here, $R_i$ is an element of $SO(3)$ and $A_i$ is a skew symmetric matrix of size $3\times 3$. Then, we can express system \eqref{A-2} on $SO(3)$ as follows:
\begin{align}\label{B-5}
\begin{cases}
\displaystyle\frac{d}{dt}R_i=R_iA_i,\\
\displaystyle\frac{d}{dt}\mathbf{a}_i=\frac{\kappa}{N}\sum_{k=1}^N\varphi(R_i, R_k)\left[
\left(1-\cos\frac{\theta_{ki}}{2}\right)(\mathbf{n}_{ki}\cdot\mathbf{a}_k)\mathbf{n}_{ki}+\sin\frac{\theta_{ki}}{2}\mathbf{a}_k\times\mathbf{n}_{ki}+\cos\frac{\theta_{ki}}{2}\mathbf{a}_k-\mathbf{a}_i,
\right]
\end{cases}
\end{align}
where 
\[
\hat{\mathbf{n}}_{ki}=\frac{\hat{\mathbf{u}}_{ki}}{\theta_{ki}},\quad \mathbf{a}_i=\check{A}_i
\]
and
\[
\theta_{ki}=\arccos\left(\frac{\mathrm{tr}(R_k^\top R_i)-1}{2}\right),\quad \hat{\mathbf{u}}_{ki}=\frac{\theta_{ki}}{2\sin\theta_{ki}}(R_k^\top R_i-R_i^\top R_k).
\]
Here, $\hat{\cdot}$ and $\check{\cdot}$ operators are defined as follows:
\[
\hat{x}:=A,\quad \check{A}:=x,
\]
where
\[
x=(x_1, x_2, x_3)\in\bbr^3,\quad A=\begin{pmatrix}
0&-x_3&x_2\\
x_3&0&-x_1\\
-x_2&x_1&0
\end{pmatrix}\in\mathfrak{so}(3).
\]

\begin{theorem}[Emergent behavior on the special orthogonal group \cite{fetecau2021emergent}]\label{T2.4}
Let $\{(R_i, A_i)\}_{i=1}^N$ be a solution to system \eqref{B-5}. Then, we have the following dichotomy for the asymptotic dynamics of $\{(R_i, A_i)\}_{i=1}^N$:

\noindent(1) either the kinetic energy tends to zero:
\[
\lim_{t\to\infty}\mathcal{E}(t)=0.
\]

\noindent(2) or the energy converges to positive value $\mathcal{E}^\infty$, and 
\[
\lim_{t\to\infty}(A_i(t)-A_k(t))=0,\quad \forall~i, k\in\mathcal{N}.
\]
Also, particles approach and rotate with constant speed along a common geodesic.
\end{theorem}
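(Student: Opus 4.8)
The plan is to reproduce, in the body (left-trivialized) frame, the energy–dissipation and Barbalat scheme already used for the sphere and the hyperbolic space, and then to separate the two alternatives through a LaSalle-type analysis of the limiting configurations. Set $\mathcal{E}(t):=\sum_{i=1}^N\|\mathbf{a}_i\|^2$, which is a fixed multiple of the kinetic energy $\sum_i\|v_i\|^2$ because $v_i=R_iA_i$ with $R_i\in SO(3)$ an isometry and $\|A_i\|$ proportional to $\|\mathbf{a}_i\|$. The algebraic heart of the argument is the observation that the bracket in the second line of \eqref{B-5} is nothing but a rotation applied to $\mathbf{a}_k$: by the Rodrigues formula the map
\[
\mathcal{R}_{ki}\mathbf{a}_k:=\left(1-\cos\tfrac{\theta_{ki}}{2}\right)(\mathbf{n}_{ki}\cdot\mathbf{a}_k)\mathbf{n}_{ki}+\sin\tfrac{\theta_{ki}}{2}\,\mathbf{a}_k\times\mathbf{n}_{ki}+\cos\tfrac{\theta_{ki}}{2}\,\mathbf{a}_k
\]
is the rotation of $\mathbf{a}_k$ by the angle $\theta_{ki}/2$ about the axis $\mathbf{n}_{ki}$, i.e. the body-frame parallel transport from $R_k$ to $R_i$. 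In particular $\mathcal{R}_{ki}\in SO(3)$, so it preserves norms and satisfies $\mathcal{R}_{ki}^{\top}=\mathcal{R}_{ik}$.

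First I would differentiate $\mathcal{E}$ along \eqref{B-5} to get
\[
\frac{d}{dt}\mathcal{E}=\frac{2\kappa}{N}\sum_{i,k=1}^N\varphi(R_i,R_k)\,\mathbf{a}_i\cdot(\mathcal{R}_{ki}\mathbf{a}_k-\mathbf{a}_i).
\]
Using $\|\mathcal{R}_{ki}\mathbf{a}_k\|=\|\mathbf{a}_k\|$, the polarization identity $\mathbf{a}_i\cdot\mathcal{R}_{ki}\mathbf{a}_k=\tfrac12\big(\|\mathbf{a}_i\|^2+\|\mathbf{a}_k\|^2-\|\mathcal{R}_{ki}\mathbf{a}_k-\mathbf{a}_i\|^2\big)$ and the symmetry $\varphi(R_i,R_k)=\varphi(R_k,R_i)$, the antisymmetric part of the double sum cancels and I obtain the dissipation law
\[
\frac{d}{dt}\mathcal{E}=-\frac{\kappa}{N}\sum_{i,k=1}^N\varphi(R_i,R_k)\,\|\mathcal{R}_{ki}\mathbf{a}_k-\mathbf{a}_i\|^2\le 0.
\]
Hence $\mathcal{E}$ is nonincreasing and bounded below by $0$, so $\mathcal{E}(t)\downarrow\mathcal{E}^\infty\ge0$ and $\int_0^\infty\sum_{i,k}\varphi(R_i,R_k)\|\mathcal{R}_{ki}\mathbf{a}_k-\mathbf{a}_i\|^2\,dt<\infty$. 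Since $SO(3)$ is compact and $\mathcal{E}$ is bounded, $R_i$, $\mathbf{a}_i$, $\theta_{ki}$, $\mathbf{n}_{ki}$ and their first time-derivatives stay bounded, so the integrand is uniformly continuous; Barbalat's lemma (Lemma \ref{L2.1}) then gives that the sum tends to $0$. As $\varphi$ is positive and continuous on the compact set $SO(3)\times SO(3)$, it is bounded below by a positive constant, and therefore
\[
\lim_{t\to\infty}\|\mathcal{R}_{ki}\mathbf{a}_k-\mathbf{a}_i\|=0\qquad\forall\,i,k\in\mathcal{N}.
\]

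The dichotomy now splits on the sign of $\mathcal{E}^\infty$. If $\mathcal{E}^\infty=0$ then $\|v_i\|\propto\|\mathbf{a}_i\|\to0$ for every $i$, which is alternative (1). The real content is the case $\mathcal{E}^\infty>0$, and here I would invoke the LaSalle invariance principle: the autonomous flow evolves on $\{(R_i,\mathbf{a}_i)\}\subset (SO(3)\times\bbr^3)^N$, whose trajectory is precompact because $SO(3)$ is compact and the $\mathbf{a}_i$ are uniformly bounded, and $\mathcal{E}$ is a Lyapunov function whose derivative vanishes exactly on $\mathcal{S}:=\{\,\mathcal{R}_{ki}\mathbf{a}_k=\mathbf{a}_i\ \forall i,k\,\}$. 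The $\omega$-limit set is thus a nonempty, compact, invariant subset of $\mathcal{S}$. On $\mathcal{S}$ the interaction term in \eqref{B-5} vanishes, so each limiting particle satisfies $\dot{R}_i=R_iA_i$ with $\dot{\mathbf{a}}_i=0$; constancy of $\mathbf{a}_i$ forces $R_i(t)=R_i^0\exp(tA_\infty^{(i)})$, a geodesic of constant speed. I would then use invariance of the identity $\mathcal{R}_{ki}\mathbf{a}_k=\mathbf{a}_i$: since $\mathcal{R}_{ki}$ fixes precisely the axis $\mathbf{n}_{ki}$, propagating this constraint in time forces all limiting angular velocities to coincide with a single $A_\infty$ aligned with the common relative axes, whence $A_i(t)-A_k(t)\to0$ and all particles rotate with the common constant speed $\|A_\infty\|$ along translates of the one-parameter subgroup $t\mapsto\exp(tA_\infty)$, i.e. a common geodesic.

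The step I expect to be the main obstacle is this last one: upgrading the parallel-transport alignment $\mathcal{R}_{ki}\mathbf{a}_k-\mathbf{a}_i\to0$ to the genuine body-frame alignment $A_i-A_k\to0$ together with the common-geodesic geometry. Unlike on the sphere or $\mathbb{H}^2$, here the rotation axes $\mathbf{n}_{ki}$ themselves evolve in time, so one cannot simply read off $\mathbf{a}_i=\mathbf{a}_k$ from $\mathcal{R}_{ki}\mathbf{a}_k=\mathbf{a}_i$; the identity only pins the limiting velocities modulo rotations about moving axes. Resolving this requires a careful analysis of the invariant set $\mathcal{S}$ through the explicit $SO(3)$ geodesic and parallel-transport formulas, showing that on $\mathcal{S}$ the relative angles $\theta_{ki}$ freeze, the axes $\mathbf{n}_{ki}$ co-rotate rigidly, and the self-consistency $\mathcal{R}_{ki}\mathbf{a}_k=\mathbf{a}_i$ can hold only when the common angular velocity is parallel to those axes, which is exactly what produces the single geodesic.
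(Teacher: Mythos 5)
You cannot be checked against an internal proof here, because the paper does not prove Theorem \ref{T2.4}: like Theorems \ref{T2.2} and \ref{T2.3}, it is quoted without proof from its source \cite{fetecau2021emergent}. Judged on its own merits, the first half of your argument is sound and is indeed the standard route, mirroring Lemma \ref{L4.1} and Theorem \ref{T4.1} of this paper: you correctly recognize the bracket in \eqref{B-5} as a Rodrigues rotation $\mathcal{R}_{ki}\in SO(3)$ (the body-frame parallel transport, so $\|\mathcal{R}_{ki}\mathbf{a}_k\|=\|\mathbf{a}_k\|$ and $\mathcal{R}_{ki}^{\top}=\mathcal{R}_{ik}$), symmetrize the double sum via polarization to obtain $\frac{d}{dt}\mathcal{E}=-\frac{\kappa}{N}\sum_{i,k}\varphi(R_i,R_k)\|\mathcal{R}_{ki}\mathbf{a}_k-\mathbf{a}_i\|^2\le 0$, and invoke Barbalat's lemma (Lemma \ref{L2.1}).

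The genuine gap is exactly where you flag it, and flagging it does not fill it. Everything the Lyapunov--Barbalat machinery yields is $\mathcal{R}_{ki}\mathbf{a}_k-\mathbf{a}_i\to 0$, which pins $\mathbf{a}_i$ to a rotated copy of $\mathbf{a}_k$ and hence forces equal norms in the limit, but not equality. The entire content of alternative (2) --- that $A_i-A_k\to 0$ as matrices and that the agents collapse onto a single geodesic --- is the passage from transport-alignment to genuine alignment, and your proposal handles it with an unproven assertion (``propagating this constraint in time forces all limiting angular velocities to coincide''). To make the LaSalle step rigorous you would have to compute the evolution of $R_k^{\top}R_i$ (hence of $\theta_{ki}$ and $\mathbf{n}_{ki}$) when each agent runs along its own geodesic $R_i(t)=R_i^0\exp(tA_i)$, and show that the largest invariant subset of $\{\mathcal{R}_{ki}\mathbf{a}_k=\mathbf{a}_i\ \forall\, i,k\}$ with positive energy contains no non-aligned relative equilibria; none of this is carried out, and it is not obvious --- it \emph{is} the theorem. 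Two secondary claims also need repair: the uniform positive lower bound on $\varphi$, and the asserted boundedness and uniform continuity of $\theta_{ki}$, $\mathbf{n}_{ki}$ and their derivatives, both fail (or require extra hypotheses) near the cut locus $\theta_{ki}=\pi$, where the formula for $\hat{\mathbf{u}}_{ki}$ in \eqref{B-5} degenerates to $0/0$ --- precisely the ill-posedness issue that motivates the present paper's construction.
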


Theorems \ref{T2.2}, \ref{T2.3}, and \ref{T2.4} are results on the different manifolds, however, the common result is that the particles are aligned on to a common geodesic.

\section{Velocity alignment models on the flat torus $\bbt^d$}\label{sec:3}
A velocity alignment model on general manifolds was suggested in \cite{ha2020emergent} and given as \eqref{A-2}. If the following a-priori condition is guaranteed in system \eqref{A-2}:
\begin{center}
``A shortest geodesic between two points $x_i(t)$ and $x_j(t)$ is unique for all $i\neq j$, $t\geq0$,"
\end{center}
then the solution of the system is well-defined and unique. Since the system contains the parallel transport on the manifold, the a-priori condition should be assumed for the well-posedness. However, if the domain $M$ has a cut locus(i.e. there exist at least two shortest geodesic between two points $x, y\in M$), then the system defined on $M$ is not well-defined in general. Since we choose the shortest geodesic, we guess that singularities come from the \textit{speciality of the shortest geodesic}. Regarding this reason, in this paper, we suggest a modified system which uses all geodesics to define the interactions. In this section, we suggest a modified velocity alignment model on the flat torus $\bbt^d$.

\subsection{Geodesics on the flat torus}\label{sec:3.1}
We are planning to consider all geodesics between $x_i,x_j\in M$ to define the interaction between $i^{th}$ and $j^{th}$ particles. So, in this subsection, we study the geodesics on the flat torus $\bbt^d$. We consider the flat torus $\mathbb{T}^d$ as a quotient space of the Euclidean space
\[
\mathbb{T}^d\simeq \bbr^d/_\sim,
\] 
where the equivalence relation $\sim$ given as 
\[
[x_1, x_2, \cdots, x_d]\sim[x_1+n_1, x_2+n_2,\cdots, x_d+n_d]\quad\forall~(n_1, \cdots, n_d)\in\bbz^d.
\]
Note that the metric of the torus was induced from $\bbr^d$.

\subsubsection{The case when $d=1$}
Before we study general dimension, we consider the simplest case. Since the universal covering space of $\bbt^1$ is $\bbr$, we can define a covering map $p$ as follows:
\[
p:\bbr\to\bbt^1,\quad x\mapsto x+\bbz.
\]
Let $x, y\in\bbt^1$ be two points on the one dimensional flat torus and $\gamma$ be a one of geodesic which connects two points $x$ and $y$. Then there exists two points $\tilde{x}, \tilde{y}\in\bbr$ such that $\gamma=p([\tilde{x}, \tilde{y}])$. Now we will express the following set:
\[
\Gamma_x^y:=\{\text{geodesics connecting $x$ and $y$}\}.
\]
From the previous argument, we can express the set of geodesics which connects $x$ and $y$ as follows:
\[
\tilde{\gamma}:[0, 1]\to\bbr^1,\quad \tilde{\gamma}(t)=\tilde{x}+t(\tilde{y}-\tilde{x}),\quad \gamma=p\circ \tilde{\gamma}.
\]
Since $\tilde{x}$ and $\tilde{y}$ are not unique, we can express all geodesics $\gamma_n$ with $n\in\bbz$ as follows:
\begin{align*}
&\tilde{\gamma}_n: [0, 1]\to \bbr^1,\quad \tilde{\gamma}_n(t)=\tilde{x}+t(\tilde{y}-\tilde{x}+n),\\
&\gamma_n: [0, 1]\to\bbt^1,\quad \gamma_n=p\circ \tilde{\gamma}_n.
\end{align*}
Then $\Gamma_{x}^y=\{\gamma_n\}_{n\in\bbz}$ is the set of geodesic from $x$ to $y$.

\begin{remark}\label{R3.1}
From the above definition, we have the following properties.
\begin{enumerate}
\item The length of $\gamma_n$ is given as
\[
|\gamma_n|=|\tilde{y}-\tilde{x}+n|.
\]
\item Since $\bbr^1/\bbt^1\simeq \bbz$, we can find a correspondence between $\Gamma_x^y$ and $\bbz$.
\end{enumerate}

\end{remark}
\subsubsection{The general case $d\geq2$}
Now we make some arguments on $\bbt^d$ case. Recall that the universal covering of $\bbt^d$  is $\bbr^d$ with the following covering map $p$:
\[
p:\bbr^d\to\bbt^d,\quad x\mapsto x+\bbz^d.
\]
From a similar argument that we made in the case of $d=1$, we can express $\Gamma_x^y$ as follows:
\[
\Gamma_x^y=\{\gamma_{\vec{n}}\}_{\vec{n}\in\bbz^d},
\]
where $\vec{n}\in\bbz^d$ and
\begin{align}
\begin{aligned}\label{C-0-1}
&\tilde{\gamma}_{\vec{n}}:[0, 1]\to\bbr^d,\quad \tilde{\gamma}_{\vec{n}}(t)=\tilde{x}+t(\tilde{y}-\tilde{x}+\vec{n}),\\
&\gamma_{\vec{n}}:[0,1]\to\bbt^d,\quad \gamma_{\vec{n}}=p\circ \tilde{\gamma}_{\vec{n}}.
\end{aligned}
\end{align}

\begin{center}
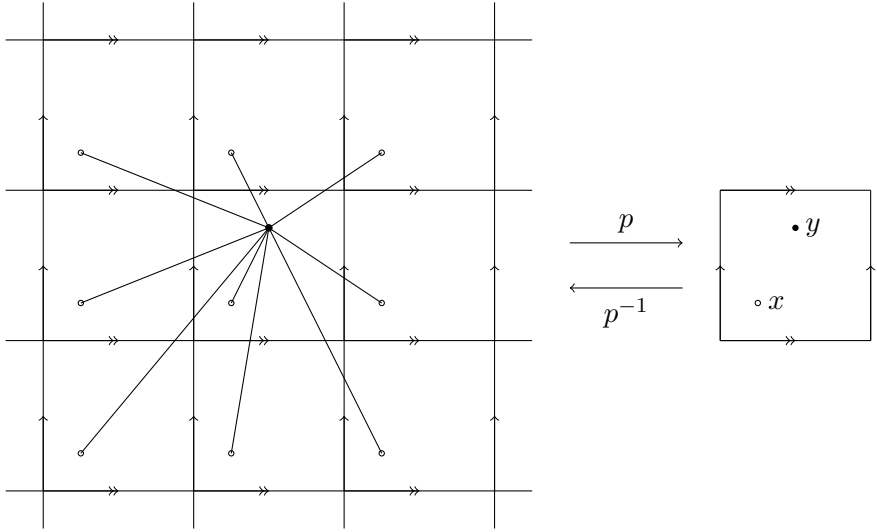
\begin{figure}[h]
\begin{tikzpicture}
%\draw (-0.5, -0.3) node {$(0, 0)$};
%\draw (-0.5+2, -0.3) node {$(1, 0)$};
%\draw (-0.5+2, -0.3+2) node {$(1, 1)$};
%\draw (-0.5, -0.3+2) node {$(0, 1)$};
\draw[-] (-2.5, 0)--(4.5, 0);
\draw[-] (-2.5, 2)--(4.5, 2);
\draw[-] (-2.5, 4)--(4.5, 4);
\draw[-] (-2.5, -2)--(4.5, -2);
\draw[-] (-2, -2.5)--(-2, 4.5);
\draw[-] (0, -2.5)--(0, 4.5);
\draw[-] (2, -2.5)--(2, 4.5);
\draw[-] (4, -2.5)--(4, 4.5);
\draw[->>] (0, 0)--(1, 0);
\draw[->>] (0, -2)--(1, -2);
\draw[->>] (0, 2)--(1, 2);
\draw[->>] (0, 4)--(1, 4);
\draw[->>] (-2, 0)--(-1, 0);
\draw[->>] (-2, -2)--(-1, -2);
\draw[->>] (-2, 2)--(-1, 2);
\draw[->>] (-2, 4)--(-1, 4);
\draw[->>] (2, 0)--(3, 0);
\draw[->>] (2, -2)--(3, -2);
\draw[->>] (2, 2)--(3, 2);
\draw[->>] (2, 4)--(3, 4);
\draw[->] (0, 0)--(0, 1);
\draw[->] (2, 0)--(2, 1);
\draw[->] (-2, 0)--(-2, 1);
\draw[->] (4, 0)--(4, 1);
\draw[->] (0, -2)--(0, -1);
\draw[->] (2, -2)--(2, -1);
\draw[->] (-2, -2)--(-2, -1);
\draw[->] (4, -2)--(4, -1);
\draw[->] (0, 2)--(0, 3);
\draw[->] (2, 2)--(2, 3);
\draw[->] (-2, 2)--(-2, 3);
\draw[->] (4, 2)--(4, 3);
\draw (0.5,0.5) circle (1pt);
\draw (0.5,0.5-2) circle (1pt);
\draw (0.5,0.5+2) circle (1pt);
\draw (0.5-2,0.5) circle (1pt);
\draw (0.5-2,0.5-2) circle (1pt);
\draw (0.5-2,0.5+2) circle (1pt);
\draw (0.5+2,0.5) circle (1pt);
\draw (0.5+2,0.5-2) circle (1pt);
\draw (0.5+2,0.5+2) circle (1pt);
\filldraw (1, 1.5) circle (1.2pt);
\draw[-] (1, 1.5)--(0.5,0.5);
\draw[-] (1, 1.5)--(0.5,0.5-2) ;
\draw[-] (1, 1.5)--(0.5,0.5+2);
\draw[-] (1, 1.5)--(0.5-2,0.5);
\draw[-] (1, 1.5)--(0.5-2,0.5-2);
\draw[-] (1, 1.5)--(0.5-2,0.5+2);
\draw[-] (1, 1.5)--(0.5+2,0.5);
\draw[-] (1, 1.5)--(0.5+2,0.5-2);
\draw[-] (1, 1.5)--(0.5+2,0.5+2);

\draw[-] (7, 0)--(9, 0);
\draw[->>] (7, 0)--(8, 0);
\draw[-] (7, 2)--(9, 2);
\draw[->>] (7, 2)--(8, 2);
\draw[-] (7, 0)--(7, 2);
\draw[->] (7, 0)--(7, 1);
\draw[-] (9, 0)--(9, 2);
\draw[->] (9, 0)--(9,1);
\draw (7.5,0.5) circle (1pt) node [right] {$x$};
\filldraw (8,1.5) circle (1pt) node [right] {$y$};

\draw[->] (5, 1.3)--(6.5, 1.3);
\draw[->] (6.5, 0.7)--(5, 0.7);
\draw (5.75, 1.3) node [above]{$p$};
\draw (5.75, 0.7) node [below]{$p^{-1}$};
\end{tikzpicture}
\caption{The universal covering of the two dimensional flat torus $\tilde{\mathbb{T}}^2$ on $\bbr^2$. White points are $p^{-1}(x)$ and a black point is an element of $p^{-1}(y)$. Drawn line segments are geodesic connecting the black point and white points.}
\label{Torus-Univ}
\end{figure}
\end{center}

\begin{remark}\label{R3.2}
From the above definition, we have the following properties.
\begin{enumerate}
\item The length of $\gamma_{\vec{n}}$ with $\vec{n}\in\bbz^d$ is given as
\[
|\gamma_{\vec{n}}|=\|\tilde{y}-\tilde{x}+\vec{n}\|.
\]
\item Since $\bbr^d/\bbt^d\simeq\bbz^d$, we can find a correspondence between $\Gamma_x^y$ and $\bbz^d$.
\end{enumerate}
\end{remark}
See Figure \ref{Torus-Univ} for the case of $d=2$, we express the geodesics of the flat torus on its universal covering space. We expressed all of the geodesics between $x,y\in M$ as follows:
\begin{align}\label{C-0-1-1}
\Gamma_x^y=\big\{
p([\tilde{x}_0,\tilde{y}]): \text{ for a fixed }\tilde{x}_0\in p^{-1}(x)\text{ and for all }\tilde{y}\in p^{-1}(y)
\big\},
\end{align}
where $[\tilde{x}_0, \tilde{y}]$ is a line segment on $\bbr^d$ which connects $\tilde{x}_0$ and $\tilde{y}$. Now, we are ready to construct a modified model on the flat torus.

\subsection{Construction of the modified model on the flat torus}\label{sec:3.2}
Before we define a system on the flat torus, we define a system on its covering space $\bbr^d$. Since we expressed the set of geodesics between two points $x, y$ as \eqref{C-0-1-1}, we can construct a system which considers all geodesics between two particles as follows: 
\begin{align}\label{C-0-2}
\begin{cases}
\displaystyle\frac{d}{dt}\tilde{x}_i=\tilde{v}_i,\vspace{0.2cm}\\
\displaystyle\frac{D}{dt}\tilde{v}_i=\frac{\kappa}{N}\sum_{k=1}^N\sum_{\tilde{y}_k\in p^{-1}(p(\tilde{x}_k))} \psi(\tilde{x}_i, \tilde{y}_k)(P_{\tilde{x}_i\tilde{y}_k}\tilde{v}_k-\tilde{v}_i),\quad t>0,\\
\tilde{x}_i(0)=\tilde{x}_i^0\in\bbr^d,\quad \tilde{v}_i(0)=\tilde{v}_i^0\in T_{\tilde{x}_i^0}\bbr^d,\quad\forall~i\in\mathcal{N},
\end{cases}
\end{align} 
where $P_{\tilde{x}_i\tilde{y}_k}$ is a parallel transport from a tangent vector at $\tilde{y}_k$ to a tangent vector at $\tilde{x}_k$. Here, we assume that the sum $\sum_{\tilde{y}_k\in p^{-1}(p(\tilde{x}_k))}$ is well-defined. We will discuss conditions for $\psi$ in the last of this subsection. Since the flat torus has zero curvatures, the parallel transport $P_{\tilde{x}_i\tilde{y}_k}$ is not necessary in this case. However, for a future generalization on general manifolds, we leave $P_{\tilde{x}_i\tilde{y}_k}$. Now, we assume that $\psi(x, y)$ is a function of $\mathrm{dist}_{\mathbb{R}^d}(x, y)$ and define a function $\varphi:\bbr_{\geq0}\to\bbr_{\geq0}$ as follows:
\[
\varphi(\mathrm{dist}_{\mathbb{R}^d}(x, y))=\psi(x, y)\quad\forall~x, y\in \bbt^d.
\]
If we put $x_i=p(\tilde{x}_i)$ and $v_i=Dp(\tilde{v}_i)$ for all $i\in\mathcal{N}$, we can reduce system \eqref{C-0-2} to $M$ as follows:
\begin{align}\label{C-0-3}
\begin{cases}
\displaystyle\frac{d}{dt}x_i=v_i,\vspace{0.2cm}\\
\displaystyle\frac{D}{dt}v_i=\frac{\kappa}{N}\sum_{k=1}^N \sum_{\gamma\in \Gamma_{x_k}^{x_i}}\varphi(|\gamma|)(P^\gamma_{ik} v_k-v_i),\quad t>0,\\
x_i(0)=x_i^0\in\bbt^d,\quad v_i(0)=v_i^0\in T_{x_i^0}\bbt^d,\quad \forall~i\in\mathcal{N},
\end{cases}
\end{align}
where $P^\gamma_{ik}$ is a parallel transport from a tangent vector at $x_k$ to a tangent vector at $x_i$ along a curve $\gamma$ and $|\gamma|$ is the length of $\gamma$. Recall that $P_{ik}^{\gamma_1}=P_{ik}^{\gamma_2}$ for any $\gamma_1, \gamma_2\in \Gamma_{x_k}^{x_i}$, since the domain of this system is the flat torus. So, we denote that $P_{ik}:=P_{ik}^\gamma$ for all $\gamma\in \Gamma_{x_k}^{x_i}$. If we define 
\begin{align}\label{C-0-3-1}
\Phi(x_i, x_k):=\sum_{\gamma\in \Gamma_{x_k}^{x_i}}\varphi(|\gamma|)
\end{align}
for all $i, k\in\mathcal{N}$, then system \eqref{C-0-3} can be expressed as follows:
\begin{align}\label{C-0-4}
\begin{cases}
\displaystyle\frac{d}{dt}x_i=v_i,\vspace{0.2cm}\\
\displaystyle\frac{D}{dt}v_i=\frac{\kappa}{N}\sum_{k=1}^N \Phi(x_i, x_k)(P_{ik} v_k-v_i),\quad t>0,\\
x_i(0)=x_i^0\in\bbt^d,\quad v_i(0)=v_i^0\in T_{x_i^0}\bbt^d,\quad \forall~i\in\mathcal{N}.
\end{cases}
\end{align}
%If we compare two systems \eqref{A-2} and \eqref{C-0-4}, then we know that system \eqref{C-0-4} is a special case of system \eqref{A-2}. 

%\subsection{Conditions for the communication functions $\psi$}\label{sec:3.3}
Now, we discuss conditions for the communication function $\psi$ to guarantee the well-posedness of system \eqref{C-0-2}. Actually, it is equivalent to find a condition for the convergence of $\Phi$ in \eqref{C-0-3-1}. From the definition of $\Phi$ and Remark \ref{R3.2} (1), we have the following calculation:
\begin{align*}
\Phi(x, y)=\sum_{\gamma\in\Gamma_y^x}\varphi(|\gamma|)=\sum_{\vec{n}\in\bbz^d}\varphi(\| \tilde{y}-\tilde{x}+\vec{n}\|).
\end{align*}
Now we assume the following two conditions for $\varphi$:\\

\noindent($\mathcal{A}1$) A function $\varphi:\bbr_{\geq0}\to\bbr_{\geq0}$ is a continuous decreasing function.\vspace{0.2cm}

\noindent($\mathcal{A}2$) A sum $\displaystyle\sum_{\vec{n}\in\bbz^d}\varphi(\|\tilde{y}-\tilde{x}+\vec{n}\|)$ converges.\\

Here, we assumed $(\mathcal{A}1)$, since $\psi$ is a communication function. $(\mathcal{A}2)$ is assumed for the well-definedness of system \eqref{C-0-2} as we mentioned before. For a further argument, we provide the following lemma.
\begin{lemma}\label{L3.1}
Let $\tilde{x}, \tilde{y}\in\bbr^d$, then we have the following inequality:
\[
\min_{\vec{n}\in \bbz^d}\|\tilde{y}-\tilde{x}+\vec{n}\|\leq  \frac{\sqrt{d}}{2}.
\]
\end{lemma}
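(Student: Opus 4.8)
The plan is to reduce the statement to a coordinate-wise rounding argument, exploiting that the Euclidean norm decomposes across coordinates. First I would set $w:=\tilde{y}-\tilde{x}\in\bbr^d$, so that the quantity to be bounded becomes $\min_{\vec{n}\in\bbz^d}\|w+\vec{n}\|$. The guiding observation is that $\|w+\vec{n}\|^2=\sum_{\alpha=1}^d\big((w)_\alpha+n_\alpha\big)^2$, so each coordinate may be optimized independently of the others.

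Concretely, for each $\alpha\in\{1,\ldots,d\}$ I would choose $n_\alpha\in\bbz$ to be an integer nearest to $-(w)_\alpha$; since every real number lies within distance $\tfrac{1}{2}$ of some integer, this gives $|(w)_\alpha+n_\alpha|\leq\tfrac{1}{2}$. Setting $\vec{n}:=(n_1,\ldots,n_d)$, this single lattice point already yields
\[
\|w+\vec{n}\|^2=\sum_{\alpha=1}^d\big((w)_\alpha+n_\alpha\big)^2\leq\sum_{\alpha=1}^d\frac{1}{4}=\frac{d}{4},
\]
and taking square roots gives $\|w+\vec{n}\|\leq\frac{\sqrt{d}}{2}$. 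Because the minimum over $\vec{n}\in\bbz^d$ cannot exceed the value at this particular choice, the asserted inequality follows at once.

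One minor point I would address for completeness is that the minimum is genuinely attained, so that writing $\min$ rather than $\inf$ is justified: since $\|w+\vec{n}\|\to\infty$ as $\|\vec{n}\|\to\infty$, only finitely many lattice points satisfy $\|w+\vec{n}\|\leq\frac{\sqrt{d}}{2}$, and the infimum is therefore achieved on a finite set. There is essentially no serious obstacle here; the only thing to verify is that independent rounding in each coordinate is compatible with the coordinate decomposition of the squared norm, which is immediate. The constant $\frac{\sqrt{d}}{2}$ is sharp, being the circumradius of the unit cube $[0,1]^d$, and the extremal case occurs precisely when $w$ sits at the center of a fundamental cell.
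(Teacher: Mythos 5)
Your proof is correct and follows essentially the same route as the paper: the paper's choice of $\vec{m}$ in its displayed case analysis is exactly your ``round each coordinate to the nearest integer'' step, yielding $|(\tilde{y})_\alpha-(\tilde{x})_\alpha+(\vec{m})_\alpha|\leq\tfrac{1}{2}$ coordinatewise and hence the bound $\sqrt{d}/2$. Your added remarks on attainment of the minimum and sharpness of the constant are fine but not needed for the argument.
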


\begin{proof}
From a simple calculation, we get
\[
\min_{\vec{n}\in\bbz^d}\|\tilde{y}-\tilde{x}+\vec{n}\|^2=((\tilde{y})_1-(\tilde{x})_1+(\vec{n})_1)^2+\cdots+((\tilde{y})_d-(\tilde{x})_d+(\vec{n})_d)^2.
\]
Now, we pick $\vec{m}\in\bbz^d$ as follows:
\begin{align}\label{C-0-5}
(\vec{m})_i=\begin{cases}
[(\tilde{x})_i-(\tilde{y})_i]\qquad &\text{if }\quad0\leq((\tilde{x})_i-(\tilde{y})_i)-[(\tilde{x})_i-(\tilde{y})_i]<\frac{1}{2},\\
[(\tilde{x})_i-(\tilde{y})_i]+1\qquad &\text{if }\quad \frac{1}{2}\leq ((\tilde{x})_i-(\tilde{y})_i)-[(\tilde{x})_i-(\tilde{y})_i]<1
\end{cases}
\end{align}
where $[x]$ is the greatest integer which is not greater than $x$. Then we know that $-\frac{1}{2}\leq (\tilde{y})_i-(\tilde{x})_i+(\vec{m})_i\leq \frac{1}{2}$ for all $1\leq i\leq N$ and this yields
\[
\min_{\vec{n}\in\bbz^d}\|\tilde{y}-\tilde{x}+\vec{n}\|^2\leq\|\tilde{y}-\tilde{x}+\vec{m}\|^2\leq d\times \left(\frac{1}{2}\right)^2.
\]
Finally we can conclude that
\[
\min_{\vec{n}\in\bbz^d}\|\tilde{y}-\tilde{x}+\vec{n}\|\leq \frac{\sqrt{d}}{2}.
\]
This is the desired result.
\end{proof}

\subsection{Condition of the communication functions $\varphi$}\label{sec:3.3}
In this subsection, we study the condition for $\varphi$ which satisfies 
\[
\Phi(x, y)=\sum_{\vec{n}\in\bbz^d}\varphi(\|\tilde{y}-\tilde{x}+\vec{n}\|)<\infty\quad\forall~x, y\in M.
\]
Since $\varphi$ is continuous, non-negative and decreasing,  we can apply the integral test on $\bbr^d$ to get
\[
\sum_{\vec{n}\in\bbz^d}\varphi(\|\tilde{y}-\tilde{x}+\vec{n}\|)\quad\text{converges}\quad\Leftrightarrow\quad \int_{\bbr^d}\varphi(|x|)dx<\infty.
\]
From the radial symmetry, we have
\[
\int_{\bbr^d}\varphi(|x|)dx=|\bbs^{d-1}|\int_0^\infty r^{d-1}\varphi(r)dr,
\]
where $|\mathbb{S}^{d-1}|$ is a $(d-1)$-dimensional Hausdorff measure of the unit $(d-1)$-dimensional sphere. Actually, $|\mathbb{S}^{d-1}|$ can be expressed as
\begin{align}\label{C-10}
|\mathbb{S}^{d-1}|=\frac{d\pi^{d/2}}{\Gamma\left(\frac{d}{2}+1\right)},
\end{align}
where $\Gamma$ is the gamma function. Since \eqref{C-10} only depends on $d$, we have
\[
\sum_{\vec{n}\in\bbz^d}\varphi(\|\tilde{y}-\tilde{x}+\vec{n}\|)\quad\text{converges}\quad\Leftrightarrow\quad\int_0^\infty r^{d-1}\varphi(r)dr<\infty.
\]
So we can conclude that the condition ($\mathcal{A}1$) and ($\mathcal{A}2$) is equivalent to the following condition.\\

\noindent($\mathcal{A}$): A function $\varphi:\bbr_{\geq0}\to\bbr_{\geq0}$ is a continuous decreasing function and satisfies $\int_0^\infty r^{d-1}\varphi(r)dr$ exists.\\

Now, we study some examples of communication functions $\varphi$ which satisfy ($\mathcal{A}$).

\begin{example}
\noindent(1) When $\varphi$ has a compact support: 
Let $\mathrm{supp}\varphi\subset[0, A]$. Then of course, $\int_{\bbr^d}r^{d-1}\varphi(r)dr<\infty$.\\
%we get 
%\[
%\int_0^\infty r^{d-1}\varphi(r)dr=\int_0^A r^{d-1}\varphi(r)dr\leq \frac{A^d\varphi(0)}{d}<\infty.
%\]

\noindent(2) When $\varphi(r)=e^{-r}$: 
From a simple calculation, we have
\begin{align*}
\int_0^\infty r^{d-1}e^{-r}dr&=[-r^{d-1}e^{-r}]_0^\infty+(d-1)\int_0^\infty r^{d-2}e^{-r}dr\\
&=\cdots\\
&=(d-1)!\int_0^\infty e^{-r}dr=(d-1)!<\infty.
\end{align*}
If $d=1$, we can calculate the explicit form of $\Phi(x, y)$. From a similar argument that we used in \eqref{C-0-5}, we can choose $\tilde{x}\in p^{-1}(x)$ and $\tilde{y}\in p^{-1}(y)$ which satisfy $0\leq \delta:=\tilde{x}-\tilde{y}<1$. Then we have
\begin{align*}
\Psi(x, y)&=\sum_{n\in\bbz}e^{-|\tilde{x}-\tilde{y}+n|}=\sum_{n\in\bbz}e^{-|\delta+n|}=\sum_{n=0}^\infty e^{-(\delta+n)}+\sum_{n=1}^\infty e^{(\delta-n)}\\
&=e^{-\delta}\frac{1}{1-e^{-1}}+e^{\delta}\frac{e^{-1}}{1-e^{-1}}=\frac{e^{1-\delta}+e^{\delta}}{e-1}=\frac{\sinh(\delta-1/2)}{\sinh(1/2)}.
\end{align*}

\noindent(3) When $\varphi(r)=\frac{1}{(1+r^2)^{\alpha}}$ with $\alpha>\frac{d}{2}$: 
From a simple calculation, we have
\begin{align*}
\int_1^\infty \frac{r^{d-1}}{(1+r^2)^\alpha}dr<\int_1^\infty r^{d-1-2\alpha}dr<\infty.
\end{align*}
\end{example}

\section{Velocity alignment models on general manifold}\label{sec:4}
In this section, we generalize the result of Section \ref{sec:3} to a general manifold $M$. 

\subsection{Construction of the modified model on general manifold}\label{sec:4.1}
We can simply generalize system \eqref{C-0-3} on general manifold $M$ as follows:
\begin{align}\label{D-0}
\begin{cases}
\displaystyle\frac{d}{dt}x_i=v_i,\vspace{0.2cm}\\
\displaystyle\frac{D}{dt}v_i=\frac{\kappa}{N}\sum_{k=1}^N\sum_{\gamma\in\Gamma_{x_k}^{x_i}}\varphi(|\gamma|)(P_{ik}^\gamma v_k-v_i),\quad t>0,\\
x_i(0)=x_i^0\in M,\quad v_i(0)=v_i^0\in T_{x_i^0}M,\quad \forall i\in \mathcal{N}.
\end{cases}
\end{align}
We only have to concern about the well-definedness and the convergence of the sum $\sum_{\gamma\in \Gamma_{x_k}^{x_i}}$. Since arguing these issues on the universal covering space is easier, we lift system \eqref{D-0} defined on $M$ to a system on its universal covering space. We bring system \eqref{C-0-2} on general manifold as follows:
\begin{align}\label{D-1}
\begin{cases}
\displaystyle\frac{d}{dt}\tilde{x}_i=\tilde{v}_i,\vspace{0.2cm}\\
\displaystyle\frac{D}{dt}\tilde{v}_i=\frac{\kappa}{N}\sum_{k=1}^N\sum_{\tilde{y}_k\in p^{-1}(p(\tilde{x}_k))} \varphi(\mathrm{dist}(\tilde{x}_i, \tilde{y}_k))(P_{\tilde{x}_i\tilde{y}_k} \tilde{u}^{\tilde{y}_k}_k-\tilde{v}_i),\quad t>0,\\
\tilde{x}_i(0)=\tilde{x}_i^0\in\tilde{M}^d,\quad \tilde{v}_i(0)=\tilde{v}_i^0\in T_{\tilde{x}_i^0}\tilde{M},\quad\forall~i\in\mathcal{N},
\end{cases}
\end{align} 
where $p^{-1}_{\tilde{x}}$ is a inverse map of $p$ defined on neighborhood of $\tilde{x}$ and $\tilde{u}_k^{\tilde{y}_k}=Dp^{-1}_{\tilde{y}_k}(v_k)=Dp^{-1}_{\tilde{y}_k}\circ Dp_{\tilde{x}_k}(\tilde{v}_k)$ and $P_{\tilde{x}\tilde{y}}$ is a parallel transport from a tangent vector at $\tilde{y}$ to a tangent vector at $\tilde{x}$. Since a sum $\sum_{\tilde{y}_k\in p^{-1}(p(\tilde{x}_k))}$ should be well-defined, we assume the follows:\\

\noindent($\mathcal{M}1$): A set $p^{-1}(x)\subset \tilde{M}$ is at most countable set for any $x\in M$.\\

When $M=\bbt^d$, we have already showed that there exists a one-to-one correspondence between $p^{-1}(x)$ and $\bbz^d$ in Section \ref{sec:3.1}. So, in this case, obviously $p^{-1}(x)$ is countable for any $x\in \bbt^d$. On the other hand, we know that $P_{\tilde{x}_i\tilde{y}_k}$ only depends on two points $\tilde{x}_i$ and $\tilde{y}_k$. We should assume the following property of $\tilde{M}$:\\

\noindent($\mathcal{M}2$): For any points $\tilde{x}, \tilde{y}\in\tilde{M}$, there exists a unique geodesic which connects two points $\tilde{x}$ and $\tilde{y}$.\\

If $\tilde{M}$ is Euclidean space or has constant negative curvature(Hyperbolic space), then it satisfies ($\mathcal{M}2$). System \eqref{D-1} given on a manifold $M$ is well defined for any $M$ which satisfies ($\mathcal{M}1$) and ($\mathcal{M}2$). Especially, if $\tilde{M}=\bbr^d$, then the parallel transport $P_{\tilde{x}_i\tilde{y}_k}$ in \eqref{D-1} can be omitted. In this case, analyzing system \eqref{D-1} is easier than analyzing system \eqref{D-0}, since we do not have to consider the parallel transport on $M$.
In Section \ref{sec:5}, we study some examples of $M$(e.g. flat torus, flat M\"{o}bius strip, and flat Klein bottle) which of the universal covering space is the Euclidean spaces.

\subsection{Reduction to the original CS model on manifold}\label{sec:4.2}
In this section, we introduce the relationship between system \eqref{D-0} and previous velocity alignment models. If we consider that system \eqref{D-0} is defined on the Euclidean space(i.e. $M=\bbr^d$), then there is the unique geodesic $\gamma$ which connects two points $x, y\in M$ and its length is $\mathrm{dist}(x, y)$. So the set of geodesics $\Gamma_x^y$ only contains one element. This implies that system \eqref{D-0} can be reduced to the original CS system \eqref{A-1}.

Now, we compare two systems \eqref{A-2} and \eqref{D-0} given on $\bbt^1$. We define $\varphi$ which satisfies $\varphi\left(\frac{1}{2}\right)=0$. If $x, y\in \bbt^1$, then the length of geodesics are $|\tilde{y}-\tilde{x}+n|$ for any $n\in\bbz$. We know that there are at most one $m\in\bbz$ such that $|\tilde{y}-\tilde{x}+m|<\frac{1}{2}$. This implies that only the shortest geodesic between two points $x_i$ and $x_k$ determine the interaction between $i^{th}$ and $k^{th}$ particles. i.e. system \eqref{D-0} can be reduced as follows:
\begin{align*}
\begin{cases}
\displaystyle\frac{d}{dt}x_i=v_i,\vspace{0.2cm}\\
\displaystyle\frac{D}{dt}v_i=\frac{\kappa}{N}\sum_{k=1}^N\varphi(\mathrm{dist}(x_i, x_k))(P_{ik} v_k-v_i),\quad t>0,\\
x_i(0)=x_i^0\in \bbt^1,\quad v_i(0)=v_i^0\in T_{x_i^0}\bbt^1,\quad \forall i\in \mathcal{N},
\end{cases}
\end{align*}
where $P_{ik}$ is a parallel transport from the tangent space at $x_i$ to the tangent space at $x_k$ along shortest geodesic between them. This system is exactly same with system \eqref{A-2}. So, we can conclude that a modified velocity alignment system \eqref{D-0} can be reduced to a previous system \eqref{A-2}.

\subsection{Emergent behaviors of the modified model}\label{sec:4.3}
In this subsection, we study the emergent behaviors of system \eqref{D-0}. We define the following energy functional:
\begin{align*}
\mathcal{E}(\mathcal{V})=\frac{1}{2}\sum_{i=1}^N\|v_i\|^2,
\end{align*}
where $\mathcal{V}=\{v_i\}_{i=1}^N$.
\begin{lemma}\label{L4.1}
Let $(\mathcal{X}, \mathcal{V})$ be a solution of system \eqref{D-0}. Then we have
\begin{align*}
\frac{d}{dt}\mathcal{E}(\mathcal{V})=-\frac{\kappa}{2N}\sum_{i, k=1}^N\sum_{\gamma\in \Gamma_{x_k}^{x_i}}\psi(|\gamma|)\|P_{ik}^\gamma v_k-v_i\|^2.
\end{align*}
i.e. the energy of the system is a non-increasing function.
\end{lemma}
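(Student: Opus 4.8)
The plan is to differentiate the energy directly and then exploit a symmetrization over particle pairs that converts the raw alignment terms into a manifestly non-positive sum of squares. Since $\mathcal{E}(\mathcal{V})=\frac{1}{2}\sum_{i=1}^N\|v_i\|^2$ with $\|v_i\|^2=\langle v_i, v_i\rangle$ measured in the Riemannian metric, and since each $v_i$ is a vector field along the curve $t\mapsto x_i(t)$, the metric compatibility of the Levi-Civita connection gives $\frac{d}{dt}\langle v_i, v_i\rangle=2\langle \frac{D}{dt}v_i, v_i\rangle$. Substituting the second equation of \eqref{D-0} then yields
\[
\frac{d}{dt}\mathcal{E}(\mathcal{V})=\sum_{i=1}^N\left\langle \frac{D}{dt}v_i, v_i\right\rangle=\frac{\kappa}{N}\sum_{i,k=1}^N\sum_{\gamma\in\Gamma_{x_k}^{x_i}}\varphi(|\gamma|)\langle P_{ik}^\gamma v_k-v_i, v_i\rangle=:\frac{\kappa}{N}S.
\]
Before manipulating $S$ I would record that the sum is absolutely convergent: parallel transport is a linear isometry, so $\|P_{ik}^\gamma v_k\|=\|v_k\|$ and each summand is bounded by $\varphi(|\gamma|)(\|v_k\|+\|v_i\|)\|v_i\|$, whence condition $(\mathcal{A}2)$ (equivalently the convergence of $\sum_{\gamma}\varphi(|\gamma|)$) justifies the rearrangements used below.

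The key step, and the main obstacle, is the symmetrization. I would introduce the involution $\sigma(i,k,\gamma)=(k,i,\bar\gamma)$ on the index set $\{(i,k,\gamma):\gamma\in\Gamma_{x_k}^{x_i}\}$, where $\bar\gamma$ denotes the reversed geodesic running from $x_i$ to $x_k$. Three geometric facts make this work: (i) reversal is a bijection $\Gamma_{x_k}^{x_i}\to\Gamma_{x_i}^{x_k}$, so $\sigma$ is well-defined and involutive; (ii) $|\bar\gamma|=|\gamma|$, hence $\varphi(|\bar\gamma|)=\varphi(|\gamma|)$; and (iii) parallel transport along the reversed geodesic is the inverse of $P_{ik}^\gamma$, that is $P_{ki}^{\bar\gamma}=(P_{ik}^\gamma)^{-1}$. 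Applying the change of variables $\sigma$ to $S$ and using that $P_{ik}^\gamma$ is an isometry to move it across the inner product,
\[
\langle (P_{ik}^\gamma)^{-1}v_i-v_k, v_k\rangle=\langle v_i-P_{ik}^\gamma v_k, P_{ik}^\gamma v_k\rangle=-\langle P_{ik}^\gamma v_k-v_i, P_{ik}^\gamma v_k\rangle,
\]
so that $S$ is also equal to $-\sum_{i,k}\sum_{\gamma}\varphi(|\gamma|)\langle P_{ik}^\gamma v_k-v_i, P_{ik}^\gamma v_k\rangle$.

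Finally I would average the two expressions for $S$: adding the original summand $\langle P_{ik}^\gamma v_k-v_i, v_i\rangle$ to the transformed one $-\langle P_{ik}^\gamma v_k-v_i, P_{ik}^\gamma v_k\rangle$ collapses the inner product to $\langle P_{ik}^\gamma v_k-v_i, v_i-P_{ik}^\gamma v_k\rangle=-\|P_{ik}^\gamma v_k-v_i\|^2$. This gives $2S=-\sum_{i,k}\sum_{\gamma}\varphi(|\gamma|)\|P_{ik}^\gamma v_k-v_i\|^2$, and multiplying by $\kappa/N$ produces exactly the claimed identity; non-positivity then follows immediately from $\varphi\geq 0$. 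I expect the genuinely delicate points to be the geometric facts (i)--(iii) above, in particular the conjugation identity $P_{ki}^{\bar\gamma}=(P_{ik}^\gamma)^{-1}$ together with the isometry property of parallel transport, rather than the algebra, which is the standard Cucker--Smale cancellation adapted to the extra sum over $\Gamma_{x_k}^{x_i}$.
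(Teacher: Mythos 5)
Your proposal is correct and follows essentially the same route as the paper: differentiate $\mathcal{E}$ via metric compatibility, substitute the dynamics, and symmetrize over $i\leftrightarrow k$ with the reversed geodesic $-\gamma$ (your involution $\sigma$), using $|\bar\gamma|=|\gamma|$, $P_{ki}^{\bar\gamma}=(P_{ik}^\gamma)^{-1}$, and the isometry of parallel transport to collapse the paired terms into $-\|P_{ik}^\gamma v_k-v_i\|^2$. Your explicit justification of absolute convergence (to license the rearrangement) and your spelled-out facts (i)--(iii) are points the paper leaves implicit, but the argument is the same one.
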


\begin{proof}
From the simple calculation, we have
\begin{align*}
\frac{d}{dt}\mathcal{E}(\mathcal{V})&=\sum_{i=1}^N\left\langle \frac{D}{dt}v_i, v_i\right\rangle\\
&=\frac{\kappa}{N}\sum_{i, k=1}^N\sum_{\gamma\in \Gamma_{x_k}^{x_i}}\psi(|\gamma|)\langle P_{ik}^\gamma v_k-v_i, v_i\rangle\\
&=\frac{\kappa}{2N}\sum_{i, k=1}^N\sum_{\gamma\in \Gamma_{x_k}^{x_i}}\psi(|\gamma|)\langle P_{ik}^\gamma v_k-v_i, v_i\rangle+\frac{\kappa}{2N}\sum_{i, k=1}^N\sum_{\gamma\in \Gamma_{x_i}^{x_k}}\psi(|\gamma|)\langle P_{ki}^\gamma v_i-v_k, v_k\rangle\\
&=\frac{\kappa}{2N}\sum_{i, k=1}^N\sum_{\gamma\in \Gamma_{x_k}^{x_i}}\psi(|\gamma|)\langle P_{ik}^\gamma v_k-v_i, v_i\rangle+\frac{\kappa}{2N}\sum_{i, k=1}^N\sum_{\gamma\in \Gamma_{x_k}^{x_i}}\psi(|\gamma|)\langle P_{ki}^{-\gamma} v_i-v_k, v_k\rangle\\
&=\frac{\kappa}{2N}\sum_{i, k=1}^N\sum_{\gamma\in \Gamma_{x_k}^{x_i}}\psi(|\gamma|)\Big(
\langle P_{ik}^\gamma v_k-v_i, v_i\rangle+\langle P_{ki}^{-\gamma} v_i-v_k, v_k\rangle
\Big)\\
&=-\frac{\kappa}{2N}\sum_{i, k=1}^N\sum_{\gamma\in \Gamma_{x_k}^{x_i}}\psi(|\gamma|)\|P_{ik}^\gamma v_k-v_i\|^2,
\end{align*}
where $-\gamma$ stands for the inverse path of $\gamma$. We have the desired result.
\end{proof}

\begin{remark}
It is easy to show that the second temporal derivative of $\mathcal{E}(\mathcal{V})$ is bounded.
\end{remark}

We combine Lemma \ref{L4.1} and Barbalat's lemma(Lemma \label{L2.1}) to obtain the following theorem.
\begin{theorem}\label{T4.1}
Let $(\mathcal{X}, \mathcal{V})$ be a solution of system \eqref{D-0}. Then we have
\begin{align*}
\lim_{t\to\infty}\sum_{\gamma\in \Gamma_{x_k}^{x_i}}\psi(|\gamma|)\|P_{ik}^\gamma v_k-v_i\|^2=0
\end{align*}
for any $i, k\in\mathcal{N}$.
\end{theorem}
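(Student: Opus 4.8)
The plan is to combine the energy dissipation identity of Lemma~\ref{L4.1} with Barbalat's lemma (Lemma~\ref{L2.1}). Write
\[
f(t):=\sum_{i,k=1}^N\sum_{\gamma\in\Gamma_{x_k}^{x_i}}\psi(|\gamma|)\|P_{ik}^\gamma v_k-v_i\|^2,
\]
so that Lemma~\ref{L4.1} reads $\frac{d}{dt}\mathcal{E}(\mathcal{V})=-\frac{\kappa}{2N}f(t)$. Since $\varphi$ is nonnegative by $(\mathcal{A}1)$, every summand is nonnegative, hence $f(t)\ge 0$ and $\mathcal{E}(\mathcal{V})$ is nonincreasing; being also bounded below by $0$, the limit $\lim_{t\to\infty}\mathcal{E}(\mathcal{V})(t)$ exists. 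In particular $\mathcal{E}(\mathcal{V})(t)\le\mathcal{E}(\mathcal{V})(0)$ for all $t$, which yields the uniform velocity bound $\|v_i(t)\|\le\sqrt{2\mathcal{E}(\mathcal{V})(0)}$ for every $i\in\mathcal{N}$ and $t\ge0$.

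First I would integrate the dissipation identity to obtain
\[
\int_0^t f(s)\,ds=\frac{2N}{\kappa}\big(\mathcal{E}(\mathcal{V})(0)-\mathcal{E}(\mathcal{V})(t)\big),
\]
whose right-hand side converges as $t\to\infty$ because the energy limit exists. Thus $\lim_{t\to\infty}\int_0^t f(s)\,ds$ exists, which supplies the integral hypothesis of Barbalat's lemma.

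The main step, and the main obstacle, is to verify that $f$ is uniformly continuous on $[0,\infty)$, so that Barbalat's lemma applies. Since $f=-\frac{2N}{\kappa}\frac{d}{dt}\mathcal{E}(\mathcal{V})$, it suffices to bound $f'$ uniformly in $t$, equivalently to show that the second temporal derivative of the energy is bounded, as anticipated in the Remark following Lemma~\ref{L4.1}. The delicate point is that $f$ contains the infinite geodesic sum $\sum_{\gamma\in\Gamma_{x_k}^{x_i}}$. Differentiating $f$ in $t$ produces terms involving $\frac{D}{dt}v_i$ (controlled through the right-hand side of \eqref{D-0} together with the velocity bound above), the derivative of $\psi(|\gamma|)=\varphi(|\gamma|)$ along the flow, and the time derivative of the parallel transports $P_{ik}^\gamma$. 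Using the uniform velocity bound, the continuity and summability built into assumption $(\mathcal{A})$—which is what guarantees that the differentiated series still converges and does so uniformly—and the smoothness of the parallel transport, one estimates $|f'(t)|$ by a constant independent of $t$. A bounded derivative makes $f$ Lipschitz, hence uniformly continuous.

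With uniform continuity in hand, Barbalat's lemma gives $\lim_{t\to\infty}f(t)=0$. Finally, since $f(t)$ is a finite sum over $i,k\in\mathcal{N}$ of the nonnegative quantities $\sum_{\gamma\in\Gamma_{x_k}^{x_i}}\psi(|\gamma|)\|P_{ik}^\gamma v_k-v_i\|^2$, the vanishing of the total sum forces each individual term to vanish, i.e.
\[
\lim_{t\to\infty}\sum_{\gamma\in\Gamma_{x_k}^{x_i}}\psi(|\gamma|)\|P_{ik}^\gamma v_k-v_i\|^2=0\qquad\forall\,i,k\in\mathcal{N},
\]
which is the assertion of the theorem.
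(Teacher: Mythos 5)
Your proposal is correct and follows exactly the paper's intended argument: the paper proves this theorem precisely by combining the dissipation identity of Lemma \ref{L4.1} with Barbalat's lemma, using the boundedness of the second temporal derivative of $\mathcal{E}(\mathcal{V})$ (the Remark after Lemma \ref{L4.1}) to secure uniform continuity, and then dropping from the full double sum to each individual nonnegative term. In fact, your write-up supplies more detail (the explicit integration of the energy identity and the uniform velocity bound) than the paper itself, which states this step in a single line.
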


From this theorem, we have the following corollary.
\begin{corollary}\label{C4.1}
Let $(\tilde{\mathcal{X}}, \tilde{\mathcal{V}})$ be a solution of system \eqref{D-1}. Then we have
\begin{align*}
\lim_{t\to\infty}\sum_{\tilde{y}_k\in p^{-1}(p(\tilde{x}_k))}\varphi(\mathrm{dist}(\tilde{x}_i, \tilde{y}_k))\|P_{\tilde{x}_i\tilde{y}_k}\tilde{u}_k^{\tilde{y}_k}-\tilde{v}_i\|^2=0,
\end{align*}
where $P_{\tilde{x}_i\tilde{y}_k}\tilde{u}_k^{\tilde{y}_k}$ is defined in \eqref{D-1}.
\end{corollary}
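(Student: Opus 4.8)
The plan is to recognize that this corollary is simply Theorem \ref{T4.1} read on the universal covering space, so the entire argument reduces to identifying the two sums term by term through the covering map $p$. Fix indices $i, k \in \mathcal{N}$ and the lift $\tilde{x}_i \in p^{-1}(x_i)$ carried by the solution of \eqref{D-1}.

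First I would set up a bijection between the index set $\Gamma_{x_k}^{x_i}$ of the sum in Theorem \ref{T4.1} and the index set $p^{-1}(p(\tilde{x}_k)) = p^{-1}(x_k)$ of the sum in the corollary. Given a geodesic $\gamma \in \Gamma_{x_k}^{x_i}$, I lift it to the unique geodesic $\tilde{\gamma}$ in $\tilde{M}$ ending at $\tilde{x}_i$; its initial point $\tilde{y}_k := \tilde{\gamma}(0)$ then lies in $p^{-1}(x_k)$. Conversely, by assumption ($\mathcal{M}2$) each $\tilde{y}_k \in p^{-1}(p(\tilde{x}_k))$ is joined to $\tilde{x}_i$ by a unique geodesic, whose projection under $p$ belongs to $\Gamma_{x_k}^{x_i}$. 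These two constructions are mutually inverse, which gives the correspondence $\gamma \leftrightarrow \tilde{y}_k$.

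Next I would verify that the summands match under this bijection. Since the covering metric makes $p$ a local isometry, lengths of curves are preserved, so $|\gamma| = \mathrm{dist}(\tilde{x}_i, \tilde{y}_k)$ and therefore $\psi(|\gamma|) = \varphi(\mathrm{dist}(\tilde{x}_i, \tilde{y}_k))$. For the velocity term I use that a local isometry intertwines parallel transports: since $v_k = Dp_{\tilde{y}_k}(\tilde{u}_k^{\tilde{y}_k})$ and $v_i = Dp_{\tilde{x}_i}(\tilde{v}_i)$, pushing the lifted difference forward by $Dp_{\tilde{x}_i}$ yields
\[
Dp_{\tilde{x}_i}\big(P_{\tilde{x}_i\tilde{y}_k}\tilde{u}_k^{\tilde{y}_k} - \tilde{v}_i\big) = P_{ik}^\gamma v_k - v_i.
\]
Because $Dp_{\tilde{x}_i}$ is a linear isometry, taking norms gives $\|P_{\tilde{x}_i\tilde{y}_k}\tilde{u}_k^{\tilde{y}_k} - \tilde{v}_i\| = \|P_{ik}^\gamma v_k - v_i\|$. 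Combined with the length identity, this shows the two sums agree term by term, so
\[
\sum_{\tilde{y}_k\in p^{-1}(p(\tilde{x}_k))}\varphi(\mathrm{dist}(\tilde{x}_i, \tilde{y}_k))\|P_{\tilde{x}_i\tilde{y}_k}\tilde{u}_k^{\tilde{y}_k}-\tilde{v}_i\|^2 = \sum_{\gamma\in \Gamma_{x_k}^{x_i}}\psi(|\gamma|)\|P_{ik}^\gamma v_k-v_i\|^2 ,
\]
and the right-hand side tends to zero as $t \to \infty$ by Theorem \ref{T4.1}, giving the conclusion.

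The main obstacle I anticipate is the rigorous justification of the intertwining identity $Dp_{\tilde{x}_i} \circ P_{\tilde{x}_i\tilde{y}_k} = P_{ik}^\gamma \circ Dp_{\tilde{y}_k}$ along the lifted geodesic. This rests on the fact that $p$ is a local isometry and that parallel transport is an intrinsic metric notion, so that the pushforward under $Dp$ of a parallel field along $\tilde{\gamma}$ is again a parallel field along $\gamma = p \circ \tilde{\gamma}$; making this precise amounts to invoking the compatibility of the Levi-Civita connection with local isometries. Everything else is bookkeeping resting on the bijection above and on the norm preservation of $Dp$.
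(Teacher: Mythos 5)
Your proposal is correct and follows essentially the same route as the paper, which states the corollary as an immediate consequence of Theorem \ref{T4.1} via the correspondence between geodesics $\gamma\in\Gamma_{x_k}^{x_i}$ and lifted base points $\tilde{y}_k\in p^{-1}(p(\tilde{x}_k))$, with $|\gamma|=\mathrm{dist}(\tilde{x}_i,\tilde{y}_k)$ and the parallel transports intertwined by $Dp$. In fact you supply more detail than the paper (which gives no explicit proof at all): the bijection via unique geodesic lifting under $(\mathcal{M}2)$, the length and norm preservation under the local isometry $p$, and the identity $Dp_{\tilde{x}_i}\circ P_{\tilde{x}_i\tilde{y}_k}=P_{ik}^\gamma\circ Dp_{\tilde{y}_k}$, which together also show that the projection of a solution of \eqref{D-1} solves \eqref{D-0}, so Theorem \ref{T4.1} indeed applies.
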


Since given models \eqref{D-0} and \eqref{D-1} are defined on too general manifold $M$, we can not provide more specific emergent behaviors. In Section \ref{sec:5}, we provide more specific emergent behaviors of system \eqref{D-0} and \eqref{D-1} on the specific manifold $M$.

\subsection{Self-interaction effect}
In a previous velocity alignment model on manifold \eqref{A-2}, $i^{th}$ particle does not interact with itself, since $P_{ii}$ is the identity map and this yields $P_{ii}v_i-v_i=0$. However, system \eqref{D-0} is different. Since we considered all geodesics which is connecting $x_i$ and $x_j$ to define the interaction between $i^{th}$ and $j^{th}$ particles, if there exists non-trivial geodesic starts from $x_i$ and finish at $x_i$ then $i^{th}$ particle interacts with itself. In this subsection, we focus on this effect. We put $N=1$ and $(x, v):=(x_1, v_1)$ to system \eqref{D-0}. Then we get
\begin{align}\label{D-10}
\begin{cases}
\displaystyle\frac{d}{dt}x=v,\vspace{0.2cm}\\
\displaystyle\frac{D}{dt}v=\kappa\sum_{\gamma\in \Gamma_x^x}\varphi(|\gamma|)(P^\gamma v-v),\quad t>0,\\
x(0)=x^0\in M,\quad v(0)=v^0\in T_{x^0}M,
\end{cases}
\end{align}
where $P^\gamma$ is a parallel transport along $\gamma$. From Lemma \ref{L4.1}, we know that
\begin{align}\label{D-11}
\frac{d}{dt}\left(\frac{1}{2}\|v\|^2\right)=-\frac{\kappa}{2}\sum_{\gamma\in \Gamma_x^x}\varphi(|\gamma|)\|P^\gamma v-v\|^2.
\end{align}
This yields
\[
\frac{d}{dt}\|v\|^2=-\kappa\sum_{\gamma\in \Gamma_x^x}\varphi(|\gamma|)\|P^\gamma v-v\|^2\geq -4\kappa\sum_{\gamma\in \Gamma_x^x}\varphi(|\gamma|)\|v\|^2,
\] 
and from a simple fact of ODE, we know that if $v^0\neq0$, then $v(t)\neq0$ for all $t\geq0$. So we can define $u(t)$ for all $t\geq0$ as follows:
\begin{align}\label{D-12}
u(t):=\frac{v(t)}{\|v(t)\|}.
\end{align}
Now we substitute \eqref{D-12} into \eqref{D-11} to get
\[
\frac{d}{dt}\|v\|^2=-\kappa\|v\|^2\sum_{\gamma\in \Gamma_x^x}\varphi(|\gamma|)\|P^\gamma u-u\|^2
\]
or equivalently,
\[
\frac{d}{dt}\ln\|v\|=-\frac{\kappa}{2}\sum_{\gamma\in \Gamma_x^x}\varphi(|\gamma|)\|P^\gamma u-u\|^2.
\]
From this, we can conclude that the self-interaction effect reduces the speed of particle, and the ratio is at most exponential. If we apply Theorem \ref{T4.1} to system \eqref{D-10}, then we get
\begin{align}\label{D-13}
\lim_{t\to\infty}\|v\|^2\sum_{\gamma\in \Gamma_x^x}\varphi(|\gamma|)\|P^\gamma u-u\|^2=0.
\end{align}
Finally, we can obtain a dichotomy for the long-time behaviors of system \eqref{D-10}:

\noindent(1) The speed converges to zero. i.e.
\[
\lim_{t\to\infty}\|v\|=0.
\]
(2) The direction of the velocity $u$ satisfies
\[
\lim_{t\to\infty}\sum_{\gamma\in \Gamma_x^x}\varphi(|\gamma|)\|P^\gamma u-u\|^2=0.
\]
This effect comes from the topology of the domain. This argument has not been argued before.

\section{Systems on quotient spaces of the Euclidean space}\label{sec:5}
In this section, we study system \eqref{D-0} on $M$ which is quotient spaces of the Euclidean space. We also assume that the universal covering space of $M$ is the Euclidean space $\bbr^d$. If we consider system \eqref{D-0} on its universal covering space $\tilde{M}=\bbr^d$, then we have system \eqref{D-1}. Since $\tilde{M}=\bbr^d$, we can omit the parallel transport $P_{\tilde{x}_i\tilde{y}_k}$ in system \eqref{D-1}. So we have
\begin{align}\label{E-1}
\begin{cases}
\displaystyle\frac{d}{dt}\tilde{x}_i=\tilde{v}_i,\vspace{0.2cm}\\
\displaystyle\frac{d}{dt}\tilde{v}_i=\frac{\kappa}{N}\sum_{k=1}^N\sum_{\tilde{y}_k\in p^{-1}(p(\tilde{x}_k))} \varphi(\mathrm{dist}(\tilde{x}_i, \tilde{y}_k))( Dp^{-1}_{\tilde{y}_k}\circ Dp_{\tilde{x}_k}(\tilde{v}_k)-\tilde{v}_i),\quad t>0,\vspace{0.2cm}\\
\tilde{x}_i(0)=\tilde{x}_i^0\in\tilde{M}^d,\quad \tilde{v}_i(0)=\tilde{v}_i^0\in T_{\tilde{x}_i^0}\tilde{M},\quad\forall~i\in\mathcal{N}.
\end{cases}
\end{align}
Here we use the original temporal derivative of $\tilde{v}_i$, since the domain is the Euclidean space. The only complicate thing in system \eqref{E-1} is calculating $Dp_{\tilde{y}}^{-1}\circ Dp_{\tilde{x}}$ for $\tilde{x}, \tilde{y}\in p^{-1}(x)$. In this section, we provide some specific spaces which are quotient spaces of the Euclidean space(e.g. The flat torus, M\"{o}bius strip, and the flat Klein bottle), and calculate the explicit form of $Dp_{\tilde{y}}^{-1}\circ Dp_{\tilde{x}}$ to obtain more specific emergent behaviors of the modified velocity alignment system.

\subsection{Example 1: The flat torus $\bbt^d$}\label{sec:5.1}
We consider the flat torus $\bbt^d$ in this subsection. We have already studied the covering map $p$ between $\bbr^d$ and $\bbt^d$ in Section \ref{sec:3.1} as follows:
\[
p(\tilde x)=((\tilde x)_1-[(\tilde x)_1], (\tilde x)_2-[(\tilde x)_2],\cdots, (\tilde x)_d-[(\tilde x)_d])\quad\forall~\tilde x\in\bbr^d.
\]
This yields that $Dp^{-1}_{\tilde{y}}\circ Dp_{\tilde{x}}$ is the identity map. Also, we have already express the set of geodesics between $x,y\in\bbt^d$ as \eqref{C-0-1}. From the above results, we can reduce system \eqref{E-1} as follows:
\begin{align}\label{E-2}
\begin{cases}
\displaystyle\frac{d}{dt}\tilde{x}_i=\tilde{v}_i,\vspace{0.2cm}\\
\displaystyle\frac{d}{dt}\tilde{v}_i=\frac{\kappa}{N}\sum_{k=1}^N\left(\sum_{\vec{n}\in\bbz^d} \varphi(\mathrm{dist}(\tilde{x}_i, \tilde{x}_k+\vec{n}))\right)( \tilde{v}_k-\tilde{v}_i),\quad t>0,\vspace{0.2cm}\\
\tilde{x}_i(0)=\tilde{x}_i^0\in\bbr^d,\quad \tilde{v}_i(0)=\tilde{v}_i^0\in \bbr^d,\quad\forall~i\in\mathcal{N}.
\end{cases}
\end{align}

Now, we apply Corollary \ref{C4.1}, we can obtain the following theorem.
\begin{proposition}\label{P5.1}
Let $(\mathcal{X}, \mathcal{V})$ be a solution of system \eqref{D-0} defined on the flat torus $\bbt^d$. Let $\varphi$ satisfies $(\mathcal{A})$ and 
\[
\varphi\left(\frac{\sqrt{d}}{2}\right)>0.
\]
Then, we have
\begin{align*}
\lim_{t\to\infty}(\tilde{v}_k-\tilde{v}_i)=0\quad\forall~i, k\in\mathcal{N}.
\end{align*}
\end{proposition}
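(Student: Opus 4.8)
The plan is to combine the dissipation estimate of Corollary \ref{C4.1} with a uniform-in-time lower bound on the coupling weight furnished by Lemma \ref{L3.1}, turning the weighted convergence into the unweighted one.

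First I would specialize Corollary \ref{C4.1} to the present setting. On the flat torus the covering map satisfies $Dp^{-1}_{\tilde{y}_k}\circ Dp_{\tilde{x}_k}=\mathrm{id}$, as already observed in the reduction from \eqref{E-1} to \eqref{E-2}, so that $\tilde{u}_k^{\tilde{y}_k}=\tilde{v}_k$ for every $\tilde{y}_k\in p^{-1}(p(\tilde{x}_k))$, and the parallel transport is trivial. Hence the summand $\|P_{\tilde{x}_i\tilde{y}_k}\tilde{u}_k^{\tilde{y}_k}-\tilde{v}_i\|^2=\|\tilde{v}_k-\tilde{v}_i\|^2$ is independent of $\tilde{y}_k$ and factors out of the sum. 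Writing $p^{-1}(p(\tilde{x}_k))=\{\tilde{x}_k+\vec{n}:\vec{n}\in\bbz^d\}$, Corollary \ref{C4.1} then reads
\[
\lim_{t\to\infty}\Big(\sum_{\vec{n}\in\bbz^d}\varphi(\mathrm{dist}(\tilde{x}_i,\tilde{x}_k+\vec{n}))\Big)\|\tilde{v}_k-\tilde{v}_i\|^2=0\qquad\forall~i,k\in\mathcal{N}.
\]

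Next I would establish a strictly positive lower bound on the parenthesized factor that holds for every configuration. Since $\varphi\geq0$ by $(\mathcal{A})$, all but one term of the sum may be discarded; and by Lemma \ref{L3.1} there is $\vec{m}\in\bbz^d$ with $\|\tilde{x}_k-\tilde{x}_i+\vec{m}\|\leq\tfrac{\sqrt{d}}{2}$, so that, using that $\varphi$ is decreasing,
\[
\sum_{\vec{n}\in\bbz^d}\varphi(\mathrm{dist}(\tilde{x}_i,\tilde{x}_k+\vec{n}))\geq\varphi\big(\|\tilde{x}_k-\tilde{x}_i+\vec{m}\|\big)\geq\varphi\!\left(\tfrac{\sqrt{d}}{2}\right)>0,
\]
the last inequality being exactly the hypothesis imposed on $\varphi$. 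Crucially this bound is independent of $t$, since Lemma \ref{L3.1} applies to arbitrary points of $\bbr^d$ and hence to the trajectory $(\tilde{x}_i(t),\tilde{x}_k(t))$ at every instant.

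Finally, combining the two displays gives the squeeze
\[
\varphi\!\left(\tfrac{\sqrt{d}}{2}\right)\|\tilde{v}_k-\tilde{v}_i\|^2\leq\Big(\sum_{\vec{n}\in\bbz^d}\varphi(\mathrm{dist}(\tilde{x}_i,\tilde{x}_k+\vec{n}))\Big)\|\tilde{v}_k-\tilde{v}_i\|^2\longrightarrow0,
\]
and dividing by the positive constant $\varphi(\sqrt{d}/2)$ yields $\lim_{t\to\infty}\|\tilde{v}_k-\tilde{v}_i\|=0$, which is the assertion. The only genuine obstacle is the second step: the squeeze is vacuous unless the coupling weight stays bounded away from zero along the flow, and this is precisely what the combinatorial Lemma \ref{L3.1} (guaranteeing a lattice translate within distance $\sqrt{d}/2$) together with the sign condition $\varphi(\sqrt{d}/2)>0$ supplies. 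Everything else is a direct repackaging of the energy dissipation of Lemma \ref{L4.1} and Barbalat's lemma already carried out in Theorem \ref{T4.1} and Corollary \ref{C4.1}.
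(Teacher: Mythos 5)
Your proposal is correct and follows essentially the same route as the paper: specialize Corollary \ref{C4.1} to the torus where the transport maps are trivial, bound the coupling weight from below by $\varphi\left(\frac{\sqrt{d}}{2}\right)>0$ via Lemma \ref{L3.1} and the monotonicity of $\varphi$, and divide out the positive constant. If anything, your write-up is slightly more careful than the paper's, which phrases the lower bound through an infimum of $\varphi(\mathrm{dist}(\tilde{x}_i,\tilde{x}_k+\vec{n}))$ rather than, as you do, keeping the single term corresponding to the minimizing lattice translate and then invoking that $\varphi$ is decreasing.
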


\begin{proof}
We apply Corollary \ref{C4.1} to system \eqref{E-2} to get
\begin{align}\label{E-3}
\lim_{t\to\infty}\sum_{\vec{n}\in\bbz^d} \varphi(\mathrm{dist}(\tilde{x}_i, \tilde{x}_k+\vec{n})\|\tilde{v}_k-\tilde{v}_i\|^2=0.
\end{align}
Since $\varphi\geq0$, we also have
\begin{align*}
\sum_{\vec{n}\in\bbz^d} \varphi(\mathrm{dist}(\tilde{x}_i, \tilde{x}_k+\vec{n})\geq\inf_{\vec{n}\in\bbz^d} \varphi(\mathrm{dist}(\tilde{x}_i, \tilde{x}_k+\vec{n})).
\end{align*}
Now we apply Lemma \ref{L3.1} to get
\[
\sum_{\vec{n}\in\bbz^d} \varphi(\mathrm{dist}(\tilde{x}_i, \tilde{x}_k+\vec{n})\geq \varphi\left(\frac{\sqrt{d}}{2}\right).
\]
Now, we substitute the above relation into \eqref{E-3} to get
\[
 \varphi\left(\frac{\sqrt{d}}{2}\right)\lim_{t\to\infty}\|\tilde{v}_k-\tilde{v}_i\|^2=0.
\]
Since $ \varphi\left(\frac{\sqrt{d}}{2}\right)>0$, we get
\[
\lim_{t\to\infty}\|\tilde{v}_k-\tilde{v}_i\|=0.
\]
\end{proof}

\begin{remark}[Self-interaction effect of the flat torus]
Since $P^\gamma$ is identity map for any closed curve $\gamma$, we have $P^\gamma v-v=0$ for any tangent vector $v$. So, relation \eqref{D-13} implies nothing in this case. So we can conclude that there is no self-interaction effect on the flat torus. 
\end{remark}

\subsection{Example 2: The flat M\"{o}bius strip $\mathbb{M}$}\label{sec:5.2}
We consider the flat M\"{o}bius strip in this subsection and denote it as $\mathbb{M}$. We consider that $\mathbb{M}$ is given on $[0, 1]\times\bbr\subset\bbr^2$ with the identification $(0, t)\sim(1, -t)$ where $t\in\bbr$. We can express the universal covering $\tilde{\mathbb{M}}=\bbr^2$ with the following covering map $p: \tilde{\bbm}\to[0, 1)\times\bbr\simeq\bbm$:
\begin{align*}
p(\tilde{x})=\begin{cases}
((\tilde{x})_1-[(\tilde{x})_1], (\tilde{x})_2),\quad &\text{if $[(\tilde{x})_1]$ is even},\\
((\tilde{x})_1-[(\tilde{x})_1],-(\tilde{x})_2),\quad&\text{if $[(\tilde{x})_1]$ is odd},
\end{cases}\quad\forall~ \tilde{x}\in \bbr^2,
\end{align*}
where $[x]$ is the largest integer not greater than $x\in\bbr$. From the above covering map $p$, we get
\[
Dp^{-1}_{\tilde{y}}\circ Dp_{\tilde{x}} (\tilde{v})=\begin{cases}
((\tilde{v})_1, (\tilde{v})_2),\quad &\text{if }[(\tilde{y})_1]-[(\tilde{x})_1]\text{ is even},\\
((\tilde{v})_1, -(\tilde{v})_2),\quad &\text{if }[(\tilde{y})_1]-[(\tilde{x})_1]\text{ is odd}.
\end{cases}
\]
For the notation simplicity, we define the operator $J$ as follows:
\[
J(\tilde{x})=((\tilde{x})_1, -(\tilde{x})_2)\quad \forall~ \tilde{x}\in\bbr^2.
\]
We also denote $J^n$ be the $n$-th power of the operator $J$. Then, we can express $p^{-1}(p(\tilde{x}))$ as follows:
\[
p^{-1}(p(\tilde{x}))=\{J^n(\tilde{x})+(n, 0): n\in \bbz\}
\]
and
\[
Dp^{-1}_{\tilde{y}}\circ Dp_{\tilde{x}} (\tilde{v})=J^n(\tilde{v})
\]
if $\tilde{y}=J^n(\tilde{x})+(n, 0)$ for some $n\in\bbz$. Using these notations, we can reduce system \eqref{E-1} into
\begin{align}\label{E-5}
\begin{cases}
\displaystyle\frac{d}{dt}\tilde{x}_i=\tilde{v}_i,\vspace{0.2cm}\\
\displaystyle\frac{d}{dt}\tilde{v}_i=\frac{\kappa}{N}\sum_{k=1}^N\sum_{n\in\bbz} \varphi(\mathrm{dist}(\tilde{x}_i, J^n(\tilde{x}_k)+(n, 0))(J^n(\tilde{v}_k)-\tilde{v}_i),\quad t>0,\\
\tilde{x}_i(0)=\tilde{x}_i^0\in\bbr^2,\quad \tilde{v}_i(0)=\tilde{v}_i^0\in \bbr^2,\quad\forall~i\in\mathcal{N}.
\end{cases}
\end{align}
From a similar argument that we made in Section \ref{sec:3.3}, we can prove
\begin{align}\label{E-5-1}
\sum_{n\in\bbz}\varphi\big(\mathrm{dist}(\tilde{x}, J^n(\tilde{y})+(n, 0))\big)<\infty\quad\forall~\tilde{x},\tilde{y}\in\bbr^2\quad\Longleftrightarrow\quad \int_0^\infty \varphi(r)dr<\infty
\end{align}
from the integral test. If we apply Corollary \ref{C4.1} to system \eqref{E-5}, we get
\begin{align}\label{E-6}
\lim_{t\to\infty}\sum_{n\in\bbz}\varphi(\mathrm{dist}(\tilde{x}_i, J^n(\tilde{x}_k)+(n, 0))\|J^n(\tilde{v}_k)-\tilde{v}_i\|=0.
\end{align}
Since $J^2=\mathrm{Id}$, we have $J^{2m+1}=J$ and $J^{2m}=\mathrm{Id}$ for all integer $m$. From this fact, we get
\begin{align*}
&\sum_{n\in\bbz}\varphi(\mathrm{dist}(\tilde{x}_i, J^n(\tilde{x}_k)+(n, 0))\|J^n(\tilde{v}_k)-\tilde{v}_i\|^2\\
&=\left(\sum_{m\in\bbz}\varphi(\mathrm{dist}(\tilde{x}_i, \tilde{x}_k+(2m, 0))\right)\|\tilde{v}_k-\tilde{v}_i\|^2\\
&\hspace{5cm}+\left(\sum_{m\in\bbz}\varphi(\mathrm{dist}(\tilde{x}_i, J(\tilde{x}_k)+(2m+1, 0))\right)\|J(\tilde{v}_k)-\tilde{v}_i\|^2.
\end{align*}
We combine the above result and \eqref{E-6} to obtain
\begin{align}\label{E-7}
\begin{cases}
\displaystyle\lim_{t\to\infty}\left(\sum_{m\in\bbz}\varphi(\mathrm{dist}(\tilde{x}_i, \tilde{x}_k+(2m, 0))\right)\|\tilde{v}_k-\tilde{v}_i\|=0,\\
\displaystyle\lim_{t\to\infty}\left(\sum_{m\in\bbz}\varphi(\mathrm{dist}(\tilde{x}_i, J(\tilde{x}_k)+(2m+1, 0))\right)\|J(\tilde{v}_k)-\tilde{v}_i\|=0,
\end{cases}
\end{align}
for all $i, k\in \mathcal{N}$. Now we have the following lemma.

\begin{lemma}\label{L5.1}
Let $p$ be a covering map between two dimensional the Euclidean space and the flat M\"{o}bius strip. 
Let $x, y\in [0, 1]\times \bbr/_\sim\simeq \mathbb{M}$ and $|x_2|, |y_2|<L$ for some positive number $L$. Then, we have
\begin{align*}
\inf_{m\in\bbz}\mathrm{dist}(\tilde{x},\tilde{y}+(2m, 0))<\sqrt{1+4L^2},\quad
\inf_{m\in\bbz}\mathrm{dist}(\tilde{x},J(\tilde{y})+(2m+1, 0))<\sqrt{1+4L^2},
\end{align*}
where $\tilde{x}\in p^{-1}(x), \tilde{y}\in p^{-1}(y)$. 
\end{lemma}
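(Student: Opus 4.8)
The plan is to lift everything to the universal cover $\bbr^2$ and to split each squared distance into its contributions from the first and second coordinates, exploiting that the translation $(2m,0)$ affects only the first coordinate while the operator $J$ affects only the second. First I would fix lifts $\tilde x\in p^{-1}(x)$ and $\tilde y\in p^{-1}(y)$ and record the one structural fact about $p$ that is needed: from the explicit formula for the covering map, the second coordinate of $p(\tilde x)$ equals $\pm(\tilde x)_2$, so that $|(\tilde x)_2|=|x_2|<L$ and $|(\tilde y)_2|=|y_2|<L$ for \emph{any} choice of lifts.

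For the first inequality I would expand
\[
\mathrm{dist}(\tilde x,\tilde y+(2m,0))^2=\big((\tilde x)_1-(\tilde y)_1-2m\big)^2+\big((\tilde x)_2-(\tilde y)_2\big)^2.
\]
The second summand is independent of $m$ and is bounded by $(|(\tilde x)_2|+|(\tilde y)_2|)^2<(2L)^2=4L^2$. For the first summand, as $m$ ranges over $\bbz$ the quantity $(\tilde x)_1-(\tilde y)_1-2m$ runs through a coset of $2\bbz$, whose covering radius is $1$; hence $\inf_m\big((\tilde x)_1-(\tilde y)_1-2m\big)^2\le 1$. Adding the two bounds and taking square roots yields $\inf_m\mathrm{dist}(\tilde x,\tilde y+(2m,0))<\sqrt{1+4L^2}$, where strictness is inherited from the strict bound on the second coordinate.

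The second inequality is handled identically once I observe that $J(\tilde y)+(2m+1,0)=\big((\tilde y)_1+2m+1,\,-(\tilde y)_2\big)$, so that
\[
\mathrm{dist}(\tilde x,J(\tilde y)+(2m+1,0))^2=\big((\tilde x)_1-(\tilde y)_1-2m-1\big)^2+\big((\tilde x)_2+(\tilde y)_2\big)^2.
\]
The sign flip from $J$ converts the second-coordinate \emph{difference} into a \emph{sum}, but since $|(\tilde x)_2+(\tilde y)_2|<2L$ the same bound $<4L^2$ still applies; and the first summand is again the squared distance from $0$ to a coset of $2\bbz$ (now shifted by the odd integer $1$), which is still $\le 1$ because the covering radius of $2\bbz$ is translation invariant.

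There is essentially no serious obstacle here: the argument is routine once the coordinates are separated. The only two points requiring genuine care are (i) tracking the sign flip in $J$ so that one correctly obtains $|(\tilde x)_2+(\tilde y)_2|$ rather than a difference in the second case, and (ii) recognizing that the relevant lattice in the first coordinate is $2\bbz$ (arising from the translations $(2m,0)$ and $(2m+1,0)$), whose covering radius $1$—not the radius $1/2$ of $\bbz$—is precisely what produces the constant $1$ underneath the square root.
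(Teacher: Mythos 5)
Your proof is correct and takes essentially the same approach as the paper: split the squared distance into the two coordinates, bound the first-coordinate term by $1$ using the even-integer translates (the paper phrases this as a WLOG normalization $-1\le \tilde{x}_1-\tilde{y}_1<1$ of the lifts, which is exactly your covering-radius-of-$2\bbz$ observation), and bound the second-coordinate term strictly by $4L^2$, noting that $J$ turns the difference $(\tilde{x})_2-(\tilde{y})_2$ into the sum $(\tilde{x})_2+(\tilde{y})_2$ without affecting the bound. No gaps; the strictness bookkeeping is also handled the same way in both arguments.
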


\begin{proof}
First, we fix two points $x, y\in\mathbb{M}$ and we choose $\tilde{x},\tilde{y}\in\bbr^2$ such that $\tilde{x}\in p^{-1}(x)$ and $\tilde{y}\in p^{-1}(y)$:
\[
\tilde{x}=(\tilde{x}_1, \tilde{x}_2),\quad \tilde{y}=(\tilde{y}_1, \tilde{y}_2).
\]
Without loss of generality, we can assume that $-1\leq \tilde{x}_1-\tilde{y}_1<1$. Then we have
\[
\mathrm{dist}(\tilde{x}, \tilde{y}+(2m, 0))^2=(\tilde{x}_1-\tilde{y}_1-2m)^2+(\tilde{x}_2-\tilde{y}_2)^2\leq(\tilde{x}_1-\tilde{y}_1)^2+(\tilde{x}_2-\tilde{y}_2)^2.
\]
Here, the last equality only holds for $m=0$. Since $-1\leq\tilde{x}_1-\tilde{y}_1<1$ and $-2L<\tilde{x}_2-\tilde{y}_2<2L$, we have
\[
\inf_{m\in\bbz}\mathrm{dist}(\tilde{x}, \tilde{y}+(2m, 0))^2<1+4L^2.
\]
Similarly, we have
\begin{align*}
\inf_{m\in\bbz}\mathrm{dist}(\tilde{x}, J(\tilde{y})+(2m+1, 0))^2&=\inf_{m\in\bbz}\left((\tilde{x}_1-\tilde{y}_1-2m-1)^2+(\tilde{x}_2+\tilde{y}_2)^2\right)<1+4L^2.
\end{align*}
\end{proof}

From this lemma and \eqref{E-7}, we get the following proposition.

\begin{proposition}\label{P5.2}
Let $(\mathcal{X}, \mathcal{V})$ be a solution of system \eqref{D-0} defined on $M=\mathbb{M}$. If we assume the following a-priori assumption: ``the second component of $x_i\in[0, 1)\times \bbr$ lies in $(-L, L)$ for all $t\geq0$ and $i\in\mathcal{N}$." We also assume that $\varphi$ satisfies \eqref{E-5-1} and $\varphi(\sqrt{1+4L^2})>0$. Then, we have
\[
\lim_{t\to\infty}(\tilde{v}_i-\tilde{v}_k)=0\quad\forall~i, k\in\mathcal{N},
\]
and the second components of all velocities converges to zero. i.e.
\[
\lim_{t\to\infty}(\tilde{v}_i)_2=0\quad\forall i\in\mathcal{N}.
\]
\end{proposition}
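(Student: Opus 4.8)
The plan is to start from the two limit relations in \eqref{E-7}, which are the direct output of applying Corollary \ref{C4.1} to system \eqref{E-5}, and convert each infinite-sum coefficient into a strictly positive lower bound so that the vector factors are forced to vanish. Concretely, for each fixed pair $i,k$ I would bound the first coefficient from below by a single dominant term: since $\varphi\geq0$, we have $\sum_{m\in\bbz}\varphi(\mathrm{dist}(\tilde{x}_i,\tilde{x}_k+(2m,0)))\geq\inf_{m\in\bbz}\varphi(\mathrm{dist}(\tilde{x}_i,\tilde{x}_k+(2m,0)))$, and because $\varphi$ is decreasing by $(\mathcal{A}1)$, this infimum equals $\varphi$ evaluated at the infimal distance. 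Lemma \ref{L5.1} together with the a-priori bound $|(x_i)_2|<L$ then gives that infimal distance $<\sqrt{1+4L^2}$, so monotonicity yields the uniform-in-time lower bound $\varphi(\sqrt{1+4L^2})>0$ for the coefficient. The identical argument applies to the second coefficient using the second inequality of Lemma \ref{L5.1}.

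Next I would feed these bounds back into \eqref{E-7}. From the first relation, $\varphi(\sqrt{1+4L^2})\,\lim_{t\to\infty}\|\tilde{v}_k-\tilde{v}_i\|=0$, and dividing by the positive constant $\varphi(\sqrt{1+4L^2})$ gives $\lim_{t\to\infty}(\tilde{v}_i-\tilde{v}_k)=0$ for all $i,k$, which is the first claimed conclusion. From the second relation I would similarly obtain $\lim_{t\to\infty}\|J(\tilde{v}_k)-\tilde{v}_i\|=0$ for all $i,k$. The point of keeping both relations is that the Möbius identification forces the interaction to see both $\tilde{v}_k$ and its reflection $J(\tilde{v}_k)$, so the dynamics drives $\tilde{v}_i$ toward both simultaneously.

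To extract the vanishing of the second components I would combine the two limits at coincident indices or across indices. Taking $k=i$ in the second relation (or combining the first and second relations for a common pair) gives $\lim_{t\to\infty}\|J(\tilde{v}_i)-\tilde{v}_i\|=0$. Writing out $J(\tilde{v}_i)-\tilde{v}_i=((\tilde{v}_i)_1-(\tilde{v}_i)_1,\,-(\tilde{v}_i)_2-(\tilde{v}_i)_2)=(0,-2(\tilde{v}_i)_2)$, the norm is exactly $2|(\tilde{v}_i)_2|$, so $\lim_{t\to\infty}|(\tilde{v}_i)_2|=0$, which is the second claimed conclusion. This is clean because $J$ acts only on the second coordinate and $J^2=\mathrm{Id}$.

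The main obstacle I anticipate is justifying that the lower bound on the coefficients is genuinely uniform in $t$: Lemma \ref{L5.1} is a pointwise statement about any configuration with $|(x_i)_2|,|(x_k)_2|<L$, so the a-priori assumption that the second component stays in $(-L,L)$ for all $t\geq0$ is exactly what upgrades the pointwise bound to a time-uniform one, and I would state explicitly that this is where the hypothesis is used. A secondary subtlety is the direction of monotonicity: since $\varphi$ is \emph{decreasing}, bounding the distance \emph{above} by $\sqrt{1+4L^2}$ correctly bounds $\varphi$ of the infimal distance \emph{below} by $\varphi(\sqrt{1+4L^2})$, and I should take care that the infimum over $m$ of the distance is what feeds into the monotonic $\varphi$. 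Everything else is a routine division by a positive constant and the algebraic identity for $J$.
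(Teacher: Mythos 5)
Your proposal is correct and follows essentially the same route as the paper's proof: lower-bounding both coefficients in \eqref{E-7} via Lemma \ref{L5.1} together with the a-priori bound and the monotonicity of $\varphi$, dividing by the positive constant $\varphi(\sqrt{1+4L^2})$ to get the two limits \eqref{E-9}, and then setting $k=i$ in the second limit so that $J(\tilde{v}_i)-\tilde{v}_i=(0,-2(\tilde{v}_i)_2)$ forces $(\tilde{v}_i)_2\to0$. Your closing remark even fixes a small imprecision that the paper itself commits in \eqref{E-8} (the sum dominates the term at the \emph{minimizing} distance, i.e.\ the supremum of the $\varphi$-values, not their infimum), so nothing further is needed.
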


\begin{proof}

From Lemma \ref{L5.1}, we have
\begin{align}
\begin{aligned}\label{E-8}
\sum_{m\in\bbz}\varphi(\mathrm{dist}(\tilde{x}_i, \tilde{x}_k+(2m, 0))&\geq \inf_{m\in\bbz}\varphi(\mathrm{dist}(\tilde{x}_i, \tilde{x}_k+(2m, 0))\geq \varphi(\sqrt{1+4L^2}),\\
\sum_{m\in\bbz}\varphi(\mathrm{dist}(\tilde{x}_i, J(\tilde{x}_k)+(2m+1, 0))&\geq \inf_{m\in\bbz}\varphi(\mathrm{dist}(\tilde{x}_i, J(\tilde{x}_k)+(2m+1, 0))\geq \varphi(\sqrt{1+4L^2}).
\end{aligned}
\end{align}
Here, we used the given a-priori assumption. We substitute \eqref{E-8} into \eqref{E-7} to obtain
\[
\varphi(\sqrt{1+4L^2})\lim_{t\to\infty}\|\tilde{v}_k-\tilde{v}_i\|=0,\quad \varphi(\sqrt{1+4L^2})\lim_{t\to\infty}\|J(\tilde{v}_k)-\tilde{v}_i\|=0.
\]
Since $\varphi(\sqrt{1+4L^2})>0$, we have 
\begin{align}\label{E-9}
\lim_{t\to\infty}\|\tilde{v}_k-\tilde{v}_i\|=0,\quad \lim_{t\to\infty}\|J(\tilde{v}_k)-\tilde{v}_i\|=0.
\end{align}
We consider the case when $k=i$ of second equality for \eqref{E-9} to get
\[
\lim_{t\to\infty}\|J(\tilde{v}_i)-\tilde{v}_i\|=0\quad\forall i\in\mathcal{N}.
\]
From the definition of the operator $J$, we could obtain that the second component of $\tilde{v}_i$ converges to zero.
\end{proof}

\begin{remark}[Self-interaction effect of the flat M\"{o}bius strip]\label{R5.2}
If we put $x_i=x_k=x$ into \eqref{E-6}, we get \eqref{D-13} on the M\"{o}bius strip as follows:
\[
\lim_{t\to\infty}\sum_{n\in\bbz}\varphi(\mathrm{dist}(\tilde{x}, J^n(\tilde{x})+(n, 0)))\|J^n(\tilde{v})-\tilde{v}\|=0.
\]
From the definition of $J$, we can simplify the above relation as follows:
\[
\lim_{t\to\infty}\left(\sum_{m\in\bbz}\varphi(\mathrm{dist}(\tilde{x}, J(\tilde{x})+(2m+1, 0)))\right)\|J(\tilde{v})-\tilde{v}\|=0.
\]
The simple calculation yields
\[
1\leq\inf_{m\in\bbz}\mathrm{dist}(\tilde{x}, J(\tilde{x})+(2m+1, 0))=\inf_{m\in\bbz}\sqrt{(2m+1)^2+4(\tilde{x})_2^2}\leq \sqrt{1+4L^2}.
\]
If $\varphi(1)=0$, then $\sum_{m\in\bbz}\varphi(\mathrm{dist}(\tilde{x}, J(\tilde{x})+(2m+1, 0)))=0$ and there is no self-interaction effect. However, if we assume $\varphi(\sqrt{1+4L^2})>0$ and a-priori condition: $|x^2|<L$, then we have
\[
\lim_{t\to\infty}\|J(\tilde{v})-\tilde{v}\|=0.
\]
This also implies
\[
\lim_{t\to\infty}(\tilde{v})_2=0.
\]
So, in this case, the self-interaction effect of the flat M\"{o}bius strip yields that the second coordinate of the velocity converges to zero.
\end{remark}

\subsection{Example 3: The flat Klein bottle $\mathbb{K}$}\label{sec:5.3}
We consider the flat Klein bottle in this subsection and denote it as $\bbk$. We consider that the $\bbk$ is given on $[0, 1]^2\subseteq\bbr^2$ as Figure \ref{Fig3}.
\begin{center}
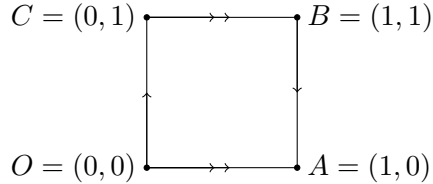
\begin{figure}[h]
\begin{tikzpicture}
\coordinate (O) at (0,0);
\draw[-] (O) -- (2, 0) -- (2, 2)--(0, 2)--(O);
\draw[->] (O)-- (0, 1);
\draw[->] (2, 2)-- (2, 1);
\draw[->] (0, 0)-- (0.9, 0);
\draw[->] (0, 0)-- (1.1, 0);
\draw[->] (0,2)-- (0.9, 2);
\draw[->] (0, 2)-- (1.1, 2);
\filldraw (O) circle (1pt) node[left] {$O=(0, 0)$};
\filldraw (2, 0) circle (1pt) node[right] {$A=(1, 0)$};
\filldraw (2, 2) circle (1pt) node[right] {$B=(1, 1)$};
\filldraw (0, 2) circle (1pt) node[left] {$C=(0, 1)$};
\end{tikzpicture}
\caption{The flat Klein bottle $\mathbb{K}$ on $\bbr^2$}
\label{Fig3}
\end{figure}
\end{center}
Here, two directed line segments $OC$ and $BA$ are identified as $(0, t)\sim(1, 1-t)$ and other two directed line segments $OA$ and $CB$ are identified as $(t, 0)\sim(t, 1)$ where $0\leq t\leq 1$. We can express the universal covering $\tilde{\bbk}$ of $\bbk$ on $\bbr^2$. Also, we can express the covering map $p:\tilde{\bbk}=\bbr^2\to[0, 1)^2\subset\bbk$ as follows:
\begin{align*}
p(\tilde{x})=\begin{cases}
((\tilde{x})_1-[(\tilde{x})_1], (\tilde{x})_2-[(\tilde{x})_2]),\quad &\text{if $[(\tilde{x})_1]$ is even},\\
((\tilde{x})_1-[(\tilde{x})_1], -(\tilde{x})_2-[-(\tilde{x})_2]),\quad &\text{if $[(\tilde{x})_1]$ is odd},
\end{cases}
\quad \tilde{x}\in\bbr^2
\end{align*}

From this, we get
\[
Dp^{-1}_{\tilde{y}}\circ Dp_{\tilde{x}} (\tilde{v})=\begin{cases}
((\tilde{v})_1, (\tilde{v})_2),\quad &\text{if }[(\tilde{y})_1]-[(\tilde{x})_1]\text{ is even},\\
((\tilde{v})_1, -(\tilde{v})_2),\quad &\text{if }[(\tilde{y})_1]-[(\tilde{x})_1]\text{ is odd}.
\end{cases}
\]
Now, we use $J$ defined in Section \ref{sec:5.2} to get
\[
p^{-1}(p(\tilde{x}))=\{J^n\tilde{x}+(n, m):(n, m)\in\bbz^2\}.
\]
We also have that if $\tilde{y}=J^n(\tilde{x})+(n, m)$ for some $(n, m)\in\bbz^2$, then
\[
Dp_{\tilde{y}}^{-1}\circ Dp_{\tilde{x}}(\tilde{v})=J^n(\tilde{v}).
\]
This notations allow us to simplify system \eqref{E-1} defined on $\mathbb{M}$ as follows:
\begin{align}\label{E-10}
\begin{cases}
\displaystyle\frac{d}{dt}\tilde{x}_i=\tilde{v}_i,\vspace{0.2cm}\\
\displaystyle\frac{D}{dt}\tilde{v}_i=\frac{\kappa}{N}\sum_{k=1}^N\sum_{(n, m)\in\bbz^2} \varphi\big(\mathrm{dist}(\tilde{x}_i, J^n(\tilde{x}_k)+(n, m))\big)(J^n(\tilde{v}_k)-\tilde{v}_i),\quad t>0,\\
\tilde{x}_i(0)=\tilde{x}_i^0\in\bbr^2,\quad \tilde{v}_i(0)=\tilde{v}_i^0\in \bbr^2,\quad\forall~i\in\mathcal{N}.
\end{cases}
\end{align} 
Again $J^2=\mathrm{Id}$, we have
\begin{align*}
&\sum_{(n, m)\in\bbz^2}\varphi(\mathrm{dist}(\tilde{x}_i, J^n(\tilde{x}_k)+(n, m)))\|J^n(\tilde{v}_k)-\tilde{v}_i\|^2\\
&=\sum_{(l, m)\in\bbz^2}\varphi(\mathrm{dist}(\tilde{x}_i, \tilde{x}_k+(2l, m)))\|\tilde{v}_k-\tilde{v}_i\|^2
+\sum_{(l, m)\in\bbz^2}\varphi(\mathrm{dist}(\tilde{x}_i, J(\tilde{x}_k)+(2l+1, m)))\|J(\tilde{v}_k)-\tilde{v}_i\|^2
\end{align*}

Now, we apply Corollary \ref{C4.1} to get 
\begin{align}\label{E-11}
\begin{cases}
\displaystyle\lim_{t\to\infty}\sum_{(l, m)\in\bbz^2}\varphi(\mathrm{dist}(\tilde{x}_i, \tilde{x}_k+(2l, m)))\|\tilde{v}_k-\tilde{v}_i\|^2=0,\\
\displaystyle\lim_{t\to\infty}\sum_{(l, m)\in\bbz^2}\varphi(\mathrm{dist}(\tilde{x}_i, J(\tilde{x}_k)+(2l+1, m)))\|J(\tilde{v}_k)-\tilde{v}_i\|^2=0,
\end{cases}
\end{align}
for all $i, k\in \mathcal{N}$. 
Now, we have the following lemma.
\begin{lemma}\label{L5.2}
Let $\tilde{x}, \tilde{y}\in\bbr^2$. Then we have
\begin{align*}
\inf_{(l, m)\in\bbz^2}\mathrm{dist}(\tilde{x}, \tilde{y}+(2l, m))\leq\frac{\sqrt{5}}{2},\quad
\inf_{(l, m)\in\bbz^2}\mathrm{dist}(\tilde{x}, J(\tilde{y})+(2l+1, m))\leq\frac{\sqrt{5}}{2}.
\end{align*}

\end{lemma}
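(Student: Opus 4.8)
The plan is to mimic the proof of Lemma \ref{L3.1}, the only new feature being the parity constraint on the first coordinate that the lattice structure of the covering map on $\bbk$ imposes. First I would write $\tilde{x}-\tilde{y}=(a,b)\in\bbr^2$, so that $\mathrm{dist}(\tilde{x},\tilde{y}+(2l,m))^2=(a-2l)^2+(b-m)^2$. For the second coordinate I would choose $m\in\bbz$ to be a nearest integer to $b$, giving $|b-m|\le\frac12$, exactly as in Lemma \ref{L3.1}. For the first coordinate the admissible horizontal shifts are forced to be even, i.e. of the form $2l$ with $l\in\bbz$; since consecutive even integers are spaced two apart, a nearest even integer $2l$ to $a$ satisfies $|a-2l|\le 1$. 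Combining the two estimates for this particular choice of $(l,m)$ yields
\[
\inf_{(l,m)\in\bbz^2}\mathrm{dist}(\tilde{x},\tilde{y}+(2l,m))^2\le (a-2l)^2+(b-m)^2\le 1^2+\left(\tfrac12\right)^2=\frac54,
\]
which is the first claimed bound after taking square roots.

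For the second inequality I would run the identical argument with $\tilde{y}$ replaced by $J(\tilde{y})$ and with the horizontal shift forced to be odd. Setting $\tilde{x}-J(\tilde{y})=(a',b')=\big((\tilde{x})_1-(\tilde{y})_1,\ (\tilde{x})_2+(\tilde{y})_2\big)$, I would again pick $m$ a nearest integer to $b'$, so $|b'-m|\le\frac12$. The odd integers $\{2l+1:l\in\bbz\}$ are likewise spaced two apart, so a nearest odd integer $2l+1$ to $a'$ satisfies $|a'-(2l+1)|\le 1$, and the same summation gives the bound $\frac{\sqrt5}{2}$. The sign flip in the second coordinate produced by $J$ is harmless, since rounding to the nearest integer is available for any real number.

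I do not expect a genuine obstacle here, as the argument is an elementary covering-radius estimate; the only point that requires care is the parity restriction, which is precisely why the constant is $\frac{\sqrt5}{2}$ rather than the $\frac{\sqrt d}{2}=\frac{\sqrt2}{2}$ of Lemma \ref{L3.1}. The admissible horizontal translates form the sublattice $2\bbz$ (respectively the coset $2\bbz+1$) of mesh $2$, which only guarantees a first-coordinate error of at most $1$ instead of $\frac12$, while the second coordinate still ranges over all of $\bbz$ and contributes the usual $\frac12$. If an explicit witness is preferred over a mere existence statement, the choice of $(l,m)$ can be written out case-by-case exactly as in \eqref{C-0-5}.
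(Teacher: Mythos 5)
Your proof is correct and is exactly the argument the paper intends: the paper omits the proof of Lemma \ref{L5.2}, stating it is analogous to Lemmas \ref{L3.1} and \ref{L5.1}, and your nearest-integer rounding in the second coordinate (error at most $\tfrac12$) combined with nearest even/odd integer in the first coordinate (error at most $1$, since the admissible shifts have mesh $2$) is precisely that adaptation, yielding $\sqrt{1+\tfrac14}=\tfrac{\sqrt5}{2}$.
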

Since the proof of the above lemma is similar to the proof of Lemmas \ref{L3.1} and Lemma \ref{L5.1}, we omitted it. From this lemma, we have the following proposition.
\begin{proposition}
Let $(\mathcal{X}, \mathcal{V})$ be a solution of system \eqref{D-0} defined on $M=\bbk$. If $\varphi$ satisfies $(\mathcal{A})$ and $\varphi\left(\frac{\sqrt{5}}{2}\right)>0$, then we have
\[
\lim_{t\to\infty}(\tilde{v}_i-\tilde{v}_k)=0\quad\forall~i, k\in\mathcal{N}
\]
and the second component of $\tilde{v}_i$ converges to zero. i.e.
\[
\lim_{t\to\infty}(\tilde{v}_i)_2=0\quad\forall~i\in\mathcal{N}.
\]
\end{proposition}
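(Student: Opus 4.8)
The plan is to transcribe, essentially line for line, the proof of Proposition \ref{P5.2}, with the Klein-bottle geometry replacing the M\"obius-strip geometry. All the analytic work has already been done: applying Corollary \ref{C4.1} to system \eqref{E-10} and splitting the lattice sum over $(n,m)\in\bbz^2$ according to the parity of $n$ (using $J^2=\mathrm{Id}$) produces the two limits recorded in \eqref{E-11}. It remains only to convert each of these into a statement about $\tilde v_k-\tilde v_i$ and $J(\tilde v_k)-\tilde v_i$ by bounding the coefficient sums from below, exactly as in the M\"obius case, with Lemma \ref{L5.2} supplying the geometric input in place of Lemma \ref{L5.1}.

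Concretely, I would first bound each coefficient sum below by a single surviving term. Since $\varphi$ is non-negative and decreasing by $(\mathcal{A})$, for every configuration one has
\[
\sum_{(l,m)\in\bbz^2}\varphi\big(\mathrm{dist}(\tilde{x}_i,\tilde{x}_k+(2l,m))\big)\geq \varphi\Big(\inf_{(l,m)\in\bbz^2}\mathrm{dist}(\tilde{x}_i,\tilde{x}_k+(2l,m))\Big)\geq\varphi\Big(\tfrac{\sqrt{5}}{2}\Big),
\]
where the first inequality is monotonicity of $\varphi$ and the second is Lemma \ref{L5.2}; the identical bound holds for the odd-parity sum involving $J(\tilde x_k)+(2l+1,m)$. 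Substituting these two bounds into \eqref{E-11} and cancelling the common positive factor $\varphi(\sqrt{5}/2)>0$ yields
\[
\lim_{t\to\infty}\|\tilde{v}_k-\tilde{v}_i\|=0,\qquad \lim_{t\to\infty}\|J(\tilde{v}_k)-\tilde{v}_i\|=0\qquad\forall\,i,k\in\mathcal{N}.
\]
The first of these is precisely the asserted velocity alignment.

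To obtain the statement on the second coordinate, I would specialize the second limit to $k=i$, giving $\lim_{t\to\infty}\|J(\tilde{v}_i)-\tilde{v}_i\|=0$; since $J(\tilde v_i)-\tilde v_i=(0,-2(\tilde v_i)_2)$ by the definition of $J$, we have $\|J(\tilde v_i)-\tilde v_i\|=2|(\tilde v_i)_2|$, whence $(\tilde v_i)_2\to0$. I do not expect a genuine obstacle: the argument is a routine specialization of Proposition \ref{P5.2}. The one point worth emphasizing is that, unlike the M\"obius-strip case, \emph{no} a-priori confinement of the second coordinate is required. Because $\bbk$ is quotiented in both directions by the $\bbz^2$-action, the infima in Lemma \ref{L5.2} are bounded by the absolute constant $\sqrt{5}/2$ uniformly in the configuration, so the single hypothesis $\varphi(\sqrt{5}/2)>0$ already forces the coefficient sums to stay bounded away from zero for all time, and the conclusion holds globally without any bound of the form $|(x_i)_2|<L$.
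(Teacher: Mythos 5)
Your proposal is correct and takes essentially the same approach as the paper: the paper's own proof is a one-line reference to the argument of Proposition \ref{P5.2} with Lemma \ref{L5.2} replacing Lemma \ref{L5.1}, which is exactly what you carried out (lower-bounding both parity-split coefficient sums in \eqref{E-11} by $\varphi\left(\frac{\sqrt{5}}{2}\right)>0$, cancelling that factor, and specializing $k=i$ in the $J$-limit to get $\lim_{t\to\infty}(\tilde{v}_i)_2=0$). Your closing observation that no a-priori confinement $|(x_i)_2|<L$ is needed, because the $\bbz^2$-quotient makes the bound $\frac{\sqrt{5}}{2}$ in Lemma \ref{L5.2} uniform in the configuration, correctly explains why the paper's hypotheses here are weaker than in the M\"{o}bius-strip case.
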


\begin{proof}
From a similar argument that we used in the proof of Proposition \ref{P5.2}, we can prove this proposition with \eqref{E-7} and Lemma \ref{L5.2}
\end{proof}

\begin{remark}[Self-interaction effect of the flat Klein bottle]
If we put $x_i=x_k=x$ into \eqref{E-11}, we get \eqref{D-13} on the flat Klein bottle as follows:
\[
\lim_{t\to\infty}\sum_{(l, m)\in\bbz^2}\varphi(\mathrm{dist}(\tilde{x}, J(\tilde{x})+(2l+1, m)))\|J(\tilde{v})-\tilde{v}\|^2=0
\]
The simple calculation yields
\[
1\leq\inf_{(l, m)\in\bbz^2}\mathrm{dist}(\tilde{x}, J(\tilde{x})+(2l+1, m))=\inf_{(l, m)\in\bbz^2}\sqrt{(2l+1)^2+(2\tilde{x}^2-m)^2}\leq \frac{\sqrt{5}}{2}.
\]
From a similar argument that we made in Remark \ref{R5.2}, if $\varphi(1)=0$, then there is no self-interaction effect. However, if $\varphi\left(\frac{\sqrt{5}}{2}\right)>0$, then the flat Klein bottle yields the self-interaction effect which makes that the second coordinate of the velocity converges to zero.
\end{remark}

\section{Conclusion}\label{sec:6}
Throughout this paper, we constructed a velocity alignment model on manifolds and studied the emergent dynamics in various spaces. Previously in \cite{ha2020emergent}, the CS model on manifold uses the shortest geodesic which connects two particles to define an interaction between them. One disadvantage of this model is that if there exist two or more than two shortest geodesics then the system can not be well-defined. We tried to remove the speciality of the shortest geodesic. So, we consider all of geodesics which is connecting $i^{th}$ particle and $j^{th}$ particle to define an interaction between $i^{th}$ and $j^{th}$ particles. In this way, we could obtain a modified velocity alignment system. One of the interesting properties of a modified system is the self-interaction effect. This effect depends on the structure of the domain of the system. In the last section, we provided the emergent behaviors of the modified system defined on some specific spaces(flat torus, flat M\"{o}bius strip, flat Klein bottle). However, our model can not be applied on the sphere $\bbs^d$ with $d\geq2$. Since there are uncountably many geodesics that connecting antipodal pair points, we can not define the sum of interactions in this case. We need another setup to cover this case, and it can be nice future work.

\bibliographystyle{abbrv}
\def\url#1{}
\bibliography{lit}

\end{document}